\renewcommand\Re{\text{Re\,}}
\newcommand\Tr{\text{Tr\,}}
\newcommand{\BE}{\begin{equation}}
\newcommand{\EE}{\end{equation}}
\newcommand{\skipc}[2]{}
\newcommand{\fig}[1]{Fig.~\ref{#1}}
\newcommand{\eq}[1]{Eq.~(\ref{#1})}
\newcommand{\Sec}[1]{Sec.~\ref{#1}}
\newcommand{\lem}[1]{Lemma~\ref{#1}}
\newcommand{\theo}[1]{Theorem~\ref{#1}}
\newcommand{\I}{\ensuremath{{\mkern1mu\mathrm{i}\mkern1mu}}}
\newcommand{\E}{\ensuremath{{\mkern1mu\mathrm{e}\mkern1mu}}}
\newcommand{\sep}{\ensuremath{{\mkern1mu\mathrm{sep}\mkern1mu}}}
\newcommand{\ps}{\ensuremath{{\mkern1mu\mathrm{ps}\mkern1mu}}}
\newcommand{\bs}{\ensuremath{{\mkern1mu\mathrm{bs}\mkern1mu}}}
\newcommand{\nc}{\ensuremath{{\mkern1mu\mathrm{nc}\mkern1mu}}}
\newcommand{\PPT}{\ensuremath{{\mkern1mu\mathrm{PPT}\mkern1mu}}}
\newtheorem{theorem}{Theorem}[section]
\newtheorem{lemma}[theorem]{Lemma}
\newcommand{\qed}{\nobreak \ifvmode \relax \else
      \ifdim\lastskip<1.5em \hskip-\lastskip
      \hskip1.5em plus0em minus0.5em \fi \nobreak
     $\square$\fi}
\newenvironment{proof}[1][Proof]{\begin{trivlist}
\item[\hskip \labelsep {\bfseries #1}]}{\end{trivlist}}
\newenvironment{step}[1][-]{\begin{trivlist}
\item[\hskip \labelsep {\bfseries #1}]}{\end{trivlist}}
\renewcommand{\vr}{\varrho}
\newcommand{\be}{\begin{equation}}
\newcommand{\ee}{\end{equation}}
\newcommand{\eea}{\end{eqnarray}}
\newcommand{\bea}{\begin{eqnarray}}
\newcommand{\mean}[1]{\ensuremath{\langle{#1}\rangle}}
\newcommand{\ket}[1]{\ensuremath{|#1\rangle}}
\newcommand{\bra}[1]{\ensuremath{\langle#1|}}
\newcommand{\braket}[2]{\ensuremath{\langle #1|#2\rangle}}
\newcommand{\ketbra}[1]{\ensuremath{| #1 \rangle \!\langle #1 |}}
\begin{document}

\title{A unified approach to entanglement criteria using the 
Cauchy-Schwarz and H\"older inequalities}

\author{Sabine W\"olk}
\affiliation{Naturwissenschaftlich-Technische Fakult\"at, Universit\"at Siegen, Walter-Flex-Str.~3, 57068 Siegen, Germany}
\author{Marcus Huber}
\affiliation{Departament de F\'isica, Universitat Aut\`onoma de Barcelona, 08193 Bellaterra, Spain}
\affiliation{ICFO-Institut de Ci\`encies Fot\`oniques, Mediterranean Technology Park, 08860 Castelldefels (Barcelona), Spain}
\author{Otfried G\"uhne}
\affiliation{Naturwissenschaftlich-Technische Fakult\"at, Universit\"at Siegen, Walter-Flex-Str.~3, 57068 Siegen, Germany}

\date{\today}

\begin{abstract}
We present unified approach to different recent entanglement 
criteria. Although they were developed in different ways, 
we show that they are all applications of a more general 
principle given by the Cauchy-Schwarz inequality. We explain 
this general principle and show how to derive with it not only 
already known but also new entanglement criteria. We systematically 
investigate its potential and limits to detect bipartite and multipartite 
entanglement.
\end{abstract}

\pacs{03.67.-a, 03.65.Ud}

\maketitle

\section{Introduction}

The phenomenon of entanglement is of fundamental interest since it is a
main difference between the classical and the quantum world. Furthermore, 
it is believed to be the central resource for quantum computing and protecting 
quantum communication. The existing entanglement criteria solve the problem of characterizing entanglement only for certain classes of states and, in addition, 
some of them are very resource intensive if applied experimentally. For example,
to apply the famous positive-partial-transpose (PPT) criterion \cite{Peres1996,Horodecki1996} experimentally, a full quantum state tomography 
is necessary in practice. Some other entanglement criteria are formulated 
as inequalities for mean values of observables \cite{Shchukin2005,Hillery2006,Guehne2010,Hillery2010,Duer2000,Huber2010}. 
This sort of entanglement criteria are especially useful to detect entanglement 
in experiments since complete knowledge about the quantum state is not necessary. 

When going from bipartite entanglement to multipartite entanglement, the 
detection and characterization of entanglement become even more complicated. 
First of all, there exist different degrees of entanglement. That is, an 
$N$-partite entangled state $\varrho$ may be a convex combination of pure 
entangled states with maximally $k$ entangled parties. If at least one 
$N$-partite entangled pure state is necessary to form $\varrho$, we call the state 
genuine multipartite entangled. Since the representation of a mixed state 
by a convex sum of pure states is not unique, the entanglement 
characterization of multipartite states is more than the combination 
of bipartite entanglement criteria. For example, there exist states 
which are entangled under every bipartite split but are not genuine 
multipartite entangled. On the other hand, there exist states, which 
are separable under every possible bipartite split, but not fully 
separable.  

Consequently, there are many different criteria and there are many 
different ways to  develop them. To give an example, Hillery and 
Zubairy developed in Ref.~\cite{Hillery2006} first a criterion  
based on uncertainties, followed by a generalized entanglement 
criterion solely based on the Cauchy-Schwarz inequality and 
the properties of separable states. Whereas this approach used 
only a single operator per subsystems, we gain more freedom by  
using two operators per subsystem. In this way, we develop 
in this paper a general principle to create entanglement 
criteria. Many already existing criteria follow immediately 
from this new principle, but also new criteria can be created.

This paper consists of  two main parts: In \Sec{sec:Cauchy} we 
introduce our scheme of developing entanglement criteria with 
the help of the Cauchy-Schwarz inequality. Here, we will first 
concentrate on bipartite entanglement in \Sec{sec:bipartied}. 
We explain  the intimate connection of the PPT criterion
\cite{Peres1996,Horodecki1996} to our criterion and give explicit 
examples of classes of states which can and which can not be detected 
by our criterion. 
Afterwards, in \Sec{sec:multi}, we generalize our scheme to 
multipartite entanglement by using the H\"older inequality
and show that in the multipartite case our 
criterion can detect entangled states which cannot be detected 
by the PPT criterion. In \Sec{sec:other_criteria} we discuss 
several already existing criteria and demonstrate how they can 
be derived within our scheme. In this way, a connection between 
the different criteria are pointed out and the advantages and 
disadvantages of the different criteria become visible.

\section{Entanglement criteria from the Cauchy-Schwarz inequality}
\label{sec:Cauchy}
First, recall that a two-particle state $\vr$ is separable, if it can be
written as a mixture of product states
\BE
\vr= \sum_{k} p_k \vr_k^{A} \otimes \vr_k^{B},
\EE
where the $p_k$ form a probability distribution. We aim at deriving entanglement
criteria in the form of inequalities which hold for separable states, but which
can be violated by entangled states. Our main tool are two simple facts: 
\\
(i) we use the property of product states $\vr=\vr^A \otimes \vr^B$ that
\BE
\langle  A   B\rangle_{\rm ps}= \mean{A}_{\vr^A} \mean{B}_{\vr^B}
\EE
for operators $A$ and $B$ acting on Alice's and Bobs subsystem, 
respectively. Here and in the following, $\mean{\dots}_{\rm ps}$ 
denotes that the expectation value is taken for a product state. 
Similarly $\mean{\dots}_{\rm sep}$ denotes an expectation value 
for a separable state.
\\
(ii) The Cauchy Schwarz (CS) inequality \cite{Cauchy}
\BE
|\langle x,y\rangle |^2 \leq \langle x,x\rangle \langle y,y\rangle
\EE
where $x$ and $y$ are two vectors and $\langle \cdot,\cdot\rangle$ defines an inner product.

\subsection{Bipartite entanglement\label{sec:bipartied}}

We concentrate on  expectation values of a bipartite system which 
can be written as $\langle   A_1   A_2  B_1   B_2\rangle$ where the 
operators $  A_i$  and $  B_j$ acting on  Alice's and Bob's subsystem, 
respectively.  For a pure state, these expectation values can be 
interpreted as the inner product of the two vectors  
$\ket{\psi_1} \equiv   A_1^\dagger   B_1^\dagger\ket{\psi}$ 
and $\ket{\psi_2}\equiv   A_2   B_2\ket{\psi}$. Since the scalar product 
is bilinear, all fragmentation of a scaler product into a bra- and a 
ket-vector can be described by two operators per subsystems. More than 
two are not necessary, since they can be always combined to a single 
operator acting on the bra- and one acting on the ket-vector. The 
expectation value $\langle   A_1   A_2  B_1   B_2\rangle$ can be 
used to detect entanglement by using the following theorem:


\begin{theorem}\label{theo:cauchy}
 The inequality 
 \BE
 | \langle   A_1   A_2  B_1   B_2\rangle_\sep|^2\leq  \langle   A_1  A_1^\dagger   B_2^\dagger  B_2\rangle_\sep\langle   A_2^\dagger  A_2 B_1  B_1^\dagger\rangle_\sep. \label{eq:Cauchy2}
 \EE
is valid for separable states and can only be violated by entangled states.
\end{theorem}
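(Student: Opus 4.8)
The plan is to prove the inequality in two stages, applying the Cauchy--Schwarz inequality first at the level of a single product state and then at the level of the convex decomposition that defines a separable state. This mirrors exactly the two ingredients (i) and (ii) listed above: the factorization of product expectation values and the Cauchy--Schwarz inequality itself.

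First I would establish the bound for a product state $\vr = \vr^A \otimes \vr^B$. By property (i), $\langle A_1 A_2 B_1 B_2\rangle_\ps = \mean{A_1 A_2}_{\vr^A}\,\mean{B_1 B_2}_{\vr^B}$, so it suffices to bound each factor separately. On Alice's side I would use the state-weighted inner product $\langle X, Y\rangle \equiv \Tr(\vr^A X^\dagger Y)$ and apply (ii) with the choice $X = A_1^\dagger$, $Y = A_2$, which yields $|\mean{A_1 A_2}_{\vr^A}|^2 \le \mean{A_1 A_1^\dagger}_{\vr^A}\,\mean{A_2^\dagger A_2}_{\vr^A}$. On Bob's side the operators must be assigned in the complementary way, $X = B_2$, $Y = B_1^\dagger$, so that the resulting norms are $\mean{B_2^\dagger B_2}_{\vr^B}$ and $\mean{B_1 B_1^\dagger}_{\vr^B}$ while the cross term still reproduces $|\mean{B_1 B_2}_{\vr^B}|^2$. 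Multiplying the two bounds and regrouping the four scalar factors by subsystem, property (i) read backwards recombines them into $\langle A_1 A_1^\dagger B_2^\dagger B_2\rangle_\ps$ and $\langle A_2^\dagger A_2 B_1 B_1^\dagger\rangle_\ps$, which is precisely \eq{eq:Cauchy2} for a product state.

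Next I would lift the bound to an arbitrary separable state $\vr = \sum_k p_k\, \vr_k^A \otimes \vr_k^B$. Writing $c_k$, $a_k$, $b_k$ for the three product expectation values evaluated on $\vr_k^A \otimes \vr_k^B$, the first stage gives $|c_k|^2 \le a_k b_k$ for every $k$, and crucially $a_k, b_k \ge 0$ because $A_1 A_1^\dagger \otimes B_2^\dagger B_2$ and $A_2^\dagger A_2 \otimes B_1 B_1^\dagger$ are positive semidefinite. Estimating $|\sum_k p_k c_k| \le \sum_k \sqrt{p_k a_k}\,\sqrt{p_k b_k}$ and applying the Cauchy--Schwarz inequality a second time, now to the sequences $(\sqrt{p_k a_k})_k$ and $(\sqrt{p_k b_k})_k$, produces $(\sum_k p_k a_k)(\sum_k p_k b_k)$ on the right, which by linearity equals $\langle A_1 A_1^\dagger B_2^\dagger B_2\rangle_\sep\,\langle A_2^\dagger A_2 B_1 B_1^\dagger\rangle_\sep$. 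This establishes the claim.

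The main obstacle I anticipate is bookkeeping rather than analysis: one has to choose the operator-to-vector assignment on each subsystem so that the four Cauchy--Schwarz norms recombine exactly into the two positive operators appearing on the right-hand side of \eq{eq:Cauchy2}, and in particular the ordering $A_1 A_1^\dagger$ versus $A_2^\dagger A_2$ (and the mirrored ordering on Bob's side) is forced by this matching. The only genuine input beyond algebra is the positivity of $a_k$ and $b_k$, which legitimizes taking square roots and guarantees that the second Cauchy--Schwarz step applies to real nonnegative sequences.
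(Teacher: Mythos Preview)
Your proposal is correct and follows essentially the same strategy as the paper: apply the Cauchy--Schwarz inequality on each subsystem to obtain \eq{eq:Cauchy2} for product states, then lift to mixtures. The only cosmetic differences are that you treat mixed product states directly via the sesquilinear form $\Tr(\vr^A X^\dagger Y)$ (the paper restricts to pure product states first), and you make the mixture step explicit as a second Cauchy--Schwarz application on the sequences $(\sqrt{p_k a_k})_k$, $(\sqrt{p_k b_k})_k$, whereas the paper phrases the same estimate as concavity of $\sqrt{f(\vr)g(\vr)}$ with a citation.
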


\begin{proof}
 For product states $\ket{\phi}=\ket{a}\ket{b}=\ket{a,b}$, we can write
 the expectation value as the product of the expectation values of the 
 subsystems. By applying the CS inequality to every single 
 subsystem we get
\begin{eqnarray}
| \bra{a,b}   A_1   A_2  B_1   B_2\ket{a,b}|^2&=&| \bra{a}   A_1   A_2 \ket{a}|^2|\bra{b} B_1   B_2\ket{b}|^2 \nonumber\\
&\leq&  \bra{a}   A_1  A_1^\dagger \ket{a}\bra{a}   A_2^\dagger  A_2 \ket{a} \nonumber \\ &&\times \bra{b} B_1  B_1^\dagger\ket{b} \bra{b}   B_2^\dagger  B_2\ket{b} \label{eq:Cauchy1}
\end{eqnarray}
which  is equal to
\BE
| \langle   A_1   A_2  B_1   B_2\rangle_\ps|^2\leq  \langle   A_1  A_1^\dagger   B_2^\dagger  B_2\rangle_\ps\langle   A_2^\dagger  A_2 B_1  B_1^\dagger\rangle_\ps.\label{eq:Cauchy_prod}
\EE
In order to see that this inequality holds also for mixed states, note first that
the root of the left-hand side $|\langle   A_1   A_2  B_1   B_2\rangle_\ps|$ is
convex in the state, while the root of the right-hand side is of the type $\sqrt{f(\vr) g(\vr)}$, where $f$ and $g$ are positive functions. This implies that it is concave \cite{Guehne2010}. Therefore, this inequality is  also valid for mixtures of product states
which proves the correctness of \eq{eq:Cauchy2} for any separable state.

\qed
\end{proof}

With the help of the CS inequality we derive for general states
\BE
| \langle   A_1   A_2  B_1   B_2\rangle|^2\leq  \langle   A_1  A_1^\dagger   B_1  B_1^\dagger\rangle \langle   A_2^\dagger  A_2   B_2^\dagger  B_2\rangle\label{eq:cauchy_ent},
\EE
which provides a tight upper limit for all states.

Depending on the choice of the operators $  A_j$ and $  B_j$ the inequality  
\eq{eq:Cauchy2} may provide a stricter limit than \eq{eq:cauchy_ent}. If yes, 
then there exist states which violates  \eq{eq:Cauchy2}. These states must 
be entangled and therefore \eq{eq:Cauchy2} is able to detect entanglement.

Regarding the application of \theo{theo:cauchy} we note:
(i) The ability of \eq{eq:Cauchy2} to detect entanglement 
depends on the chosen operators $  A_j$ and $  B_j$. For example, 
by a comparison of \eq{eq:cauchy_ent} and \eq{eq:Cauchy2} we 
immediately find that if $  A_1  A_1^\dagger=  A_2^\dagger   A_2$ or $  B_1  B_1^\dagger=  B_2^\dagger   B_2$ \eq{eq:Cauchy2} cannot be violated for any state. 
(ii) The optimal choice of the operators $  A_j$ and $  B_j$ depends in 
general on the state $\varrho$. (iii) Whereas \eq{eq:Cauchy2} is valid 
for general mixed separable states, \eq{eq:Cauchy1} used to develop 
\eq{eq:Cauchy2} is only valid for pure product states. As a consequence, 
great care concerning convexity has to be taken when generalizing 
\eq{eq:Cauchy2} to multipartite systems.

After the development of the entanglement criteria \eq{eq:Cauchy2} 
we will investigate the bipartite case now in more detail. First, 
we discuss the best choice of operators.

   
\begin{theorem}\label{theo:opti}
 The best choice of the operators $  A_j$ are given by
 \BE
   A_1 = \ket{a}\bra{\varphi};\;   A_2 = \ket{\varphi}\bra{\alpha}
 \EE
 with $\ket{a}$, $\ket{\alpha}$ and $\ket{\varphi}$ being  pure 
 states of Alice's subsystem, and an analogous  choice for Bob.
\end{theorem}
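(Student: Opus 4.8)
The plan is to maximise the violation of \eq{eq:Cauchy2} over all admissible operators and to show that the maximum is attained by the stated rank-one form. First I would note that the criterion depends on the $A_j$ only through the three combinations $A_1A_2$, $A_1A_1^\dagger$ and $A_2^\dagger A_2$, so I treat $P_A:=A_1A_1^\dagger\ge0$, $Q_A:=A_2^\dagger A_2\ge0$ and $X_A:=A_1A_2$ as the free data on Alice's side (and $P_B,Q_B,X_B$ on Bob's), the only constraint being the Gram positivity $\left(\begin{smallmatrix}P_A & X_A\\ X_A^\dagger & Q_A\end{smallmatrix}\right)=\left(\begin{smallmatrix}A_1\\A_2^\dagger\end{smallmatrix}\right)\!\left(A_1^\dagger\ A_2\right)\ge0$, which exactly characterises the triples coming from some $A_1,A_2$. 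Writing $N:=|\langle A_1A_2B_1B_2\rangle|^2$ for the left-hand side and $S$, $G$ for the right-hand sides of \eq{eq:Cauchy2} and \eq{eq:cauchy_ent}, the general-state bound $N\le G$ gives the key estimate $R:=N/S\le G/S$ for the violation ratio, and the quotient $G/S$ contains none of the $X$'s: it is built solely from $P_A,Q_A,P_B,Q_B$.

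Next I would maximise $G/S$ over the positive cone. Holding everything but $P_A$ fixed, its dependence reduces to $\Tr[P_A M_1]/\Tr[P_A M_2]$ with $M_1=\Tr_B[(\mathbbm{1}\otimes P_B)\vr]\ge0$ and $M_2=\Tr_B[(\mathbbm{1}\otimes Q_B)\vr]\ge0$. By the mediant inequality $\tfrac{\sum_k\lambda_k}{\sum_k\mu_k}\le\max_k\tfrac{\lambda_k}{\mu_k}$, applied to a spectral decomposition $P_A=\sum_k\ket{x_k}\bra{x_k}$, this ratio is maximised at a rank-one $P_A=\ketbra{a}$ whose direction is the top generalised eigenvector of the pencil $(M_1,M_2)$. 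Applying the identical argument to $Q_A,P_B,Q_B$ forces all four to be rank one, $Q_A=\ketbra{\alpha}$, $P_B=\ketbra{b}$, $Q_B=\ketbra{\beta}$.

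With $P_A=\ketbra{a}$ and $Q_A=\ketbra{\alpha}$ rank one, the positivity constraint pins down the operators: $A_1=\ket{a}\bra{\chi_1}$ and $A_2=\ket{\chi_2}\bra{\alpha}$ for normalised $\ket{\chi_1},\ket{\chi_2}$, whence $X_A=A_1A_2=\braket{\chi_1}{\chi_2}\,\ket{a}\bra{\alpha}$. Since $S$ is already fixed and $N\propto|\braket{\chi_1}{\chi_2}|^2$, the optimum takes $\ket{\chi_1}=\ket{\chi_2}=:\ket{\varphi}$, giving precisely $A_1=\ket{a}\bra{\varphi}$, $A_2=\ket{\varphi}\bra{\alpha}$, and the same for Bob. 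For a pure state $\vr=\ketbra{\psi}$ one verifies immediately that this choice saturates $N=G$ — both equal $|\braket{a,b}{\psi}|^2\,|\braket{\alpha,\beta}{\psi}|^2$ — so the upper bound $G/S$ is actually reached and the rank-one operators are optimal.

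The hard part will be the extension from pure to mixed $\vr$: there $N\le G$ is generally strict for rank-one operators, so one cannot reach $G/S$ directly and it is no longer obvious that higher-rank operators cannot help. The natural remedy is the convexity argument already used in \theo{theo:cauchy} — $\sqrt N$ is convex and $\sqrt S$ concave in the state — which should let me reduce the optimisation to the pure states appearing in a separable decomposition and then apply the rank-one reduction termwise. Making this reduction watertight, together with checking that the generalised-eigenvalue maxima are attained rather than only approached when the pencils $(M_1,M_2)$ are singular, is where the real care is needed.
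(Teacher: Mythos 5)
Your pure-state argument is essentially correct and takes a genuinely different route from the paper: the Gram-matrix parametrization of the admissible triples $(P_A,Q_A,X_A)$, the bound $N/S\le G/S$, the mediant-inequality reduction of $G/S$ to rank-one $P$'s and $Q$'s, and the saturation $N=G$ for pure states are all sound (modulo the degenerate-pencil caveat you flag, which is minor). But the gap you admit at the end is real and is not a technicality: the theorem is needed precisely for mixed states --- every later application of it in the paper (\lem{lemma:operators}, the two-qubit theorem, the white-noise theorem, the criterion \eq{eq:optbisep}) applies rank-one operators to mixed $\vr$ --- and your argument establishes optimality only for pure $\vr$.

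Moreover, the remedy you propose does not close the gap. Convexity of $\sqrt N$ and concavity of $\sqrt S$ in the state give: if a mixed $\vr=\sum_k p_k\ketbra{\psi_k}$ violates the general criterion, then some component $\ket{\psi_k}$ violates it, and your pure-state result then shows that $\ket{\psi_k}$ is detected by a rank-one criterion. What the theorem requires, however, is that $\vr$ itself is detected by a rank-one criterion, and detection does not pass from a component to the mixture: from ``$\ket{\psi_k}$ violates a particular rank-one inequality'' one cannot conclude that $\vr$ violates it, since mixing can restore that inequality. Nor can the chain be repaired through $N\le G$, because for mixed states $N<G$ is generically strict even for rank-one operators, so a large $G/S$ never certifies a violation. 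The paper avoids the state entirely and argues pointwise on the operators: write $A_1=\sum_j x_j\ket{a_j}\bra{\varphi_j}$, $A_2=\sum_j y_j\ket{\varphi_j}\bra{\alpha_j}$ with $\lbrace\ket{\varphi_j}\rbrace$ orthonormal (no loss of generality), bound the left-hand side by the triangle inequality, insert the rank-one criteria $|\Tr[\ket{a_j}\bra{\alpha_j}B_1B_2\vr]|\le\sqrt{p_jq_j}$ termwise, and then use $r^2+s^2\ge 2rs$ to show that the resulting bound $\big(\sum_j|x_j||y_j|\sqrt{p_jq_j}\big)^2$ is dominated by the general right-hand side $\sum_j|x_j|^2p_j\sum_k|y_k|^2q_k$. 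This proves, for an arbitrary (mixed) state, that satisfaction of all rank-one criteria implies satisfaction of the general criterion --- exactly the contrapositive you need. To complete your proof you must replace the detour through $G$ by an argument of this pointwise type; the $G/S$ route is intrinsically a pure-state argument.
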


\begin{proof}
Any pair of operators $A_1$ and $  A_2$ can be written 
as
\begin{eqnarray}
    A_1 &=& \sum\limits_j x_j\ket{a_j}\bra{\varphi_j}\\
     A_2 &=& \sum\limits_j y_j\ket{\varphi_j}\bra{\alpha_j}
\end{eqnarray}
 with $\lbrace \ket{\varphi_j}\rbrace$ being an orthonormal basis 
 of Alice's subsystem. As a consequence, the entanglement criterion 
 \eq{eq:Cauchy2} turns into
\BE
\Big|\sum\limits_j x_jy_j \Tr[\ket{a_j}_A\bra{ \alpha_j}  B_1  B_2 \varrho]\Big|^2 \leq \sum\limits_j |x_j|^2p_j\sum\limits_k |y_k|^2q_k
\EE
where we defined
 \begin{eqnarray}
 p_j&\equiv& \Tr[\ket{a_j}\bra{a_j}  B_2^\dagger  B_2 \varrho], \nonumber 
 \\
  q_k&\equiv& \Tr[\ket{ \alpha_k}\bra{ \alpha_k}  B_1  B_1^\dagger \varrho].
\end{eqnarray}
The right hand side of this equation is equal to
\BE
  \begin{split}
 RS1=&\sum\limits_j |x_j|^2|y_j|^2 p_jq_j\\
 &+ \sum\limits_j\sum\limits_{k>j}|x_j|^2|y_k|^2p_jq_k+|x_k|^2|y_j|^2 p_k q_j
 \end{split}\label{eq:LS1}
 \EE
 On the other hand, we can estimate the left hand side by
 \BE
 \begin{split}
 &\Big|\sum\limits_j x_jy_j \Tr[\ket{a_j}\bra{ \alpha_j}  B_1  B_2 \varrho]\Big|^2\\&\leq \Big(\sum\limits_j |x_j||y_j| \left|\Tr[\ket{a_j}_A\bra{ \alpha_j}  B_1  B_2 \varrho]\right|\Big)^2
 \end{split}
 \EE
 and use \eq{eq:Cauchy2} for every single term in the summation. This 
 leads to
 \BE
 \Big|\sum\limits_j x_jy_j \Tr[\ket{a_j}\bra{ \alpha_j}  B_1  B_2 \varrho]\Big|^2\leq \Big(\sum\limits_j |x_j||y_j|\sqrt{p_jq_j}\Big)^2
 \EE
 By expanding the right hand side we get
\BE
 \begin{split}
 RS2=&\sum\limits_j |x_j|^2|y_j|^2p_jq_j\\&+\sum\limits_j\sum\limits_{k>j} 2|x_jx_ky_jy_k|\sqrt{p_jp_kq_jq_k} \label{eq:LS2}
  \end{split}
 \EE
By using $r^2+s^2\geq 2 rs$ valid for all real numbers $r,s$ and identifying
\begin{eqnarray}
r&=&|x_j||y_k|\sqrt{p_jq_k}\\
s&=&|x_k||y_j|\sqrt{p_kq_j}
\end{eqnarray}
for all arbitrary but fixed combination of $j$ and $k$ we find by 
comparing \eq{eq:LS1} with \eq{eq:LS2}
\BE
RS2 \leq RS1.
\EE
As a consequence, using several entanglement criteria with operators
\BE
  A_1 = \ket{a_j}\bra{\varphi_j};\;   A_2 = \ket{\varphi_j}\bra{\alpha_j}
\EE
for every $j$ separately leads to a stronger criterion than a  linear combination of these operators.\qed
\end{proof}
Naturally the same holds for Bob's subsystems {which leads to the well known 
entanglement criterion \cite{Guehne2010, Huber2010}
\BE
\left|\bra{\alpha,\beta}\varrho\ket{a,b}_\sep\right|\leq \sqrt{\bra{a,\beta}\varrho\ket{a,\beta}_\sep\bra{\alpha,b}\varrho\ket{\alpha,b}_\sep }\label{eq:optbisep}.
\EE}
   
Now, we turn to the question which states can be detected by \theo{theo:cauchy}. 
Similar to the criterion from Hillery and Zubairy \cite{Hillery2009}, our 
criterion is strongly connected to the positive partial transpose (PPT) 
criterion in the bipartite case \cite{Peres1996,Horodecki1996}.

\begin{theorem}\label{theo:PPT}
  The criterion \theo{theo:cauchy} for detecting bipartite entanglement 
  detects only states with a negative partial transpose (NPT).  
\end{theorem}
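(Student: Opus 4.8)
The plan is to prove the contrapositive: every state with positive partial transpose (PPT) satisfies the criterion of \theo{theo:cauchy}, so that a violation can only occur for NPT states. The key simplification is that by \theo{theo:opti} the optimal operators are rank one, so the sharpest form of the criterion is \eq{eq:optbisep},
\[
|\bra{\alpha,\beta}\varrho\ket{a,b}|^2\leq \bra{a,\beta}\varrho\ket{a,\beta}\,\bra{\alpha,b}\varrho\ket{\alpha,b}.
\]
If even this tightest inequality holds for all product vectors $\ket{a,b}$ and $\ket{\alpha,\beta}$, then every weaker instance of \eq{eq:Cauchy2} holds as well, and the criterion cannot detect the state.

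First I would recall that for a PPT state the partial transpose $\varrho^{T_A}$ is a positive semidefinite operator, so the sesquilinear form $\bra{x}\varrho^{T_A}\ket{y}$ is a positive semidefinite inner product. The ordinary CS inequality of fact (ii) then yields $|\bra{u}\varrho^{T_A}\ket{v}|^2\leq\bra{u}\varrho^{T_A}\ket{u}\,\bra{v}\varrho^{T_A}\ket{v}$ for all vectors $\ket{u},\ket{v}$. The idea is to apply this with two cleverly chosen \emph{product} vectors so that the three resulting expectation values of $\varrho^{T_A}$ coincide exactly with the three terms appearing in \eq{eq:optbisep}.

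The matching step is the heart of the argument. Using the matrix-element action of the partial transpose, $\bra{m,n}\varrho^{T_A}\ket{p,q}=\bra{p,n}\varrho\ket{m,q}$ in a fixed product basis, I would set $\ket{u}=\ket{a^{*},\beta}$ and $\ket{v}=\ket{\alpha^{*},b}$, where $*$ denotes complex conjugation in that basis. A short computation then gives $\bra{u}\varrho^{T_A}\ket{v}=\bra{\alpha,\beta}\varrho\ket{a,b}$ for the off-diagonal term, while the diagonal terms are left untouched by the transpose, $\bra{u}\varrho^{T_A}\ket{u}=\bra{a,\beta}\varrho\ket{a,\beta}$ and $\bra{v}\varrho^{T_A}\ket{v}=\bra{\alpha,b}\varrho\ket{\alpha,b}$. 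Hence the ordinary CS inequality for the positive operator $\varrho^{T_A}$ is literally \eq{eq:optbisep}, proving that every PPT state obeys the criterion, so any violation forces NPT.

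I expect the only real obstacle to be bookkeeping rather than anything conceptual: one must track carefully how the partial transpose swaps the two bra/ket labels on Alice's side and introduces the basis-dependent complex conjugation, and verify that the off-diagonal term lands on $\bra{\alpha,\beta}\varrho\ket{a,b}$ with the correct assignment of $a,\alpha,b,\beta$. Since these four states are arbitrary, the conjugation is a harmless relabeling; equivalently one may work with $\varrho^{T_B}$, which has the same spectrum as $\varrho^{T_A}$, to avoid it altogether. No step requires more than the positivity of $\varrho^{T_A}$ together with \theo{theo:opti}.
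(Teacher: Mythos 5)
Your proof is correct, and it rests on the same key insight as the paper's: for a PPT state $\varrho^{T_A}\geq 0$, so a Cauchy--Schwarz inequality taken with respect to $\varrho^{T_A}$ reproduces the criterion exactly, hence only NPT states can violate it. The two routes differ in how that insight is deployed. The paper works directly with arbitrary operators: it shifts the partial transpose onto the operators, $\langle A_1A_2B_1B_2\rangle=\Tr[A_2^TA_1^TB_1B_2\,\varrho^{T_A}]$, and applies the operator-level CS inequality $|\Tr[OP\varrho]|\leq\sqrt{\Tr[OO^\dagger\varrho]\Tr[P^\dagger P\varrho]}$ with $O=A_2^TB_1$ and $P=A_1^TB_2$, so it never needs \theo{theo:opti}. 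You instead invoke \theo{theo:opti} to reduce \eq{eq:Cauchy2} to its rank-one form \eq{eq:optbisep}, after which only the elementary vector-level CS inequality for the positive form $\bra{x}\varrho^{T_A}\ket{y}$ is required; your matrix-element matching with $\ket{u}=\ket{a^{*},\beta}$, $\ket{v}=\ket{\alpha^{*},b}$ is correct (all three terms of \eq{eq:optbisep} come out exactly, with no relabeling even needed, since the conjugations cancel). Your route buys a more elementary CS step and makes manifest that the criterion only probes $2\times 2$ subspaces, a point the paper argues separately later; its cost is the dependence on the reduction step. Note that \theo{theo:opti} as proven in the paper reduces only Alice's operators to rank one, the Bob-side half behind \eq{eq:optbisep} being asserted (``naturally the same holds'') rather than written out, so your proof inherits that gap while the paper's argument is self-contained; moreover you must read \theo{theo:opti}'s proof as the state-independent implication that if all rank-one instances hold for a given $\varrho$ then all general instances hold for that $\varrho$ --- its algebra does establish exactly this --- since you apply it to a PPT, not a separable, state. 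One peripheral slip: working with $\varrho^{T_B}$ would not avoid the complex conjugation, it merely moves it to Bob's vectors; this is immaterial, as you say, because the vectors are arbitrary.
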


\begin{proof}
 By using the partial transpose with respect to system A (written as $T_A$) acting on the state $\varrho$ as well as on the operators $  A_1$ and $  A_2$ we rewrite the expectation value 
 \BE|\langle   A_1   A_2  B_1   B_2\rangle|= \left|\Tr[  A_2^T   A_1^T   B_1   B_2 \varrho^{T_A}]\right|.
 \EE
 If $\varrho$ is PPT, than $\varrho^{T_A}$ is a valid density operator and the Cauchy-Schwarz inequality for expectation values
 \BE
 \left|\Tr[O P \varrho]\right|\leq \sqrt{\Tr[O O^\dagger \varrho]\Tr[P^\dagger P \varrho]}
 \EE
 can be applied. This leads to 
\BE
\begin{split}
 \left|\Tr[  A_2^T   A_1^T   B_1   B_2 \varrho_\PPT^{T_A}]\right| \leq& \sqrt{\Tr[  A_2^T \left(  A_2^T\right)^\dagger   B_1   B_1^\dagger \varrho_\PPT^{T_A}]}\\ &\times \sqrt{\Tr[ \left(  A_1^T\right)^\dagger   A_1^T    B_2^\dagger    B_2\varrho_\PPT^{T_A}]
 }\label{eq:cauchy_TA}
 \end{split}
 \EE
 were we identified $O=A_2^TB_1$ and $P=A_1^TB_2$.
Writing the expectation value again with respect to $\varrho$ we finally arrive at
  \BE
  \begin{split}
  |\langle   A_1   A_2  B_1   B_2\rangle_\PPT|\leq& \sqrt{\langle  A_2^\dagger   A_2   B_1   B_1^\dagger \rangle_\PPT \langle   A_1  A_1^\dagger    B_2^\dagger    B_2 \rangle_\PPT}
  \end{split}
  \EE
  which is equal to \eq{eq:Cauchy2}. As a consequence, PPT states are not 
  able to violate \eq{eq:Cauchy2}.\qed
 \end{proof}
 
  We note that \theo{theo:PPT} is only valid for the bipartite case. In the multipartite case, also PPT-states can be detected, since our criteria not only checks
 bipartite entanglement but real multipartite entanglement as we will show in the next section. However, we will once again stay in the bipartite case  and identify now a few classes of entangled states which can be detected by \eq{eq:Cauchy2}. Furthermore, we will also show how to find the right operators $  A_j$ and $  B_j$ for these cases. To achieve this task, the following lemma will be helpful: 
    
\begin{lemma}\label{lemma:operators}
 Let $\ket{\lambda_j^+}$ be the eigenstates of $\varrho^{T_A}$ corresponding to the
 positive eigenvalues and $\ket{\lambda_j^-}$ the ones corresponding to negative eigenvalues. Furthermore, we assume that there exist  states $\lbrace\ket{a_k}\rbrace$ and $\lbrace\ket{b_k}\rbrace$ of Alice's and Bobs subsystems, respectively, with $k \in \lbrace 1,2\rbrace$ such that
 \begin{eqnarray}
  \braket{a_2,b_1}{\lambda_j^+}&=&  c^+ \braket{a_1,b_2}{\lambda_j^+}\\
  \braket{a_2,b_1}{\lambda_j^-}&=&  c^-\braket{a_1,b_2}{\lambda_j^-}
 \end{eqnarray}
with   $c^\pm$ independent of $ j$ and $c^+c^-<0$. By choosing the operators
\begin{eqnarray}
      A_1  = \ket{a_1^\ast}\bra{\alpha}&& B_1= \ket{b_1}\bra{\beta}\\
      A_2 = \ket{\alpha}\bra{a_2^\ast}&&B_2= \ket{\beta}\bra{b_2}  
\end{eqnarray}
with $\ket{\alpha},\ket{\beta}$ being arbitrary states of Alice's and Bobs subsystems, respectively, \eq{eq:Cauchy2} is violated  by the state $\varrho$.
\end{lemma}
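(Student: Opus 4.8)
The plan is to substitute the proposed rank-one operators into the three expectation values of \eq{eq:Cauchy2}, evaluated on the given state $\vr$, and to show that for this choice the left-hand side strictly exceeds the right-hand side. First I would take $\ket{\alpha},\ket{\beta}$ (and hence $\ket{\alpha^*}$) normalized, so that the orthonormality relations collapse the operator products to rank-one objects,
\BE
\begin{split}
A_1A_2\,B_1B_2 &= \ket{a_1^*,b_1}\bra{a_2^*,b_2},\\
A_1A_1^\dagger\,B_2^\dagger B_2 &= \ket{a_1^*,b_2}\bra{a_1^*,b_2},\\
A_2^\dagger A_2\,B_1B_1^\dagger &= \ket{a_2^*,b_1}\bra{a_2^*,b_1},
\end{split}
\EE
turning each of the three quantities into a single matrix element of $\vr$.

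The decisive step is to pass to the partial transpose exactly as in the proof of \theo{theo:PPT}, writing $\mean{A_1A_2B_1B_2}=\Tr[A_2^TA_1^T B_1B_2\,\vr^{T_A}]$ and likewise for the two factors on the right. Computing $A_2^TA_1^T=\ket{a_2}\bra{a_1}$ together with the analogous transposes of the Hermitian $A$-blocks re-expresses all three as matrix elements of $\vr^{T_A}$ between the two vectors $\ket{a_1,b_2}$ and $\ket{a_2,b_1}$:
\BE
\begin{split}
\mean{A_1A_2B_1B_2} &= \bra{a_1,b_2}\vr^{T_A}\ket{a_2,b_1},\\
\mean{A_1A_1^\dagger B_2^\dagger B_2} &= \bra{a_1,b_2}\vr^{T_A}\ket{a_1,b_2},\\
\mean{A_2^\dagger A_2 B_1 B_1^\dagger} &= \bra{a_2,b_1}\vr^{T_A}\ket{a_2,b_1}.
\end{split}
\EE
These are precisely the overlaps controlled by the hypotheses. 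Expanding $\vr^{T_A}$ in its eigenbasis and setting $S^\pm\equiv\sumlim_j\lambda_j^\pm|\braket{a_1,b_2}{\lambda_j^\pm}|^2$ (so that $S^+\ge0$ and $S^-\le0$), the proportionality relations $\braket{a_2,b_1}{\lambda_j^\pm}=c^\pm\braket{a_1,b_2}{\lambda_j^\pm}$ give $\mean{A_1A_2B_1B_2}=\overline{c^+}S^++\overline{c^-}S^-$, $\mean{A_1A_1^\dagger B_2^\dagger B_2}=S^++S^-$, and $\mean{A_2^\dagger A_2 B_1B_1^\dagger}=|c^+|^2S^++|c^-|^2S^-$.

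Since $S^\pm$ are real, the modulus on the left of \eq{eq:Cauchy2} equals $|c^+S^++c^-S^-|$, and the whole inequality reduces to a two-term algebraic statement. Expanding both sides and using $2\Re(c^+\overline{c^-})-|c^+|^2-|c^-|^2=-|c^+-c^-|^2$ yields the identity
\BE
|c^+S^++c^-S^-|^2-(S^++S^-)\big(|c^+|^2S^++|c^-|^2S^-\big)=-|c^+-c^-|^2\,S^+S^-,
\EE
so the margin by which \eq{eq:Cauchy2} is exceeded is $-|c^+-c^-|^2 S^+S^-$. As $S^+S^-\le0$ this is nonnegative, and it is strictly positive precisely when $S^+>0$, $S^-<0$ and $c^+\ne c^-$; the last of these is guaranteed by $c^+c^-<0$.

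The main obstacle is not the algebra but the strictness bookkeeping at the end: I have to argue that $\ket{a_1,b_2}$ genuinely overlaps both the positive and the negative eigenspaces of $\vr^{T_A}$, i.e. $S^+>0$ and $S^-<0$, which is exactly the assertion that $\vr$ is NPT in the probed direction and is what the hypothesis $c^+c^-<0$ is designed to encode. The one conceptual leap earlier is recognizing that the partial-transpose reformulation of \theo{theo:PPT} is what converts the overlap hypotheses into the three expectation values of \eq{eq:Cauchy2}; once that identification is made, the remainder is the routine expansion and the single-line identity above.
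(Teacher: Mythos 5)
Your proposal is correct and follows essentially the same route as the paper's proof: plug in the rank-one operators, rewrite all three expectation values as matrix elements of $\varrho^{T_A}$ exactly as in the proof of \theo{theo:PPT}, expand in the eigenbasis of $\varrho^{T_A}$, and exploit the proportionality constants $c^\pm$. The only difference is in the final algebra — the paper estimates the right-hand side via $-(|c^+|^2+|c^-|^2)\leq -2|c^+||c^-|$ to bound it by a difference of the two terms whose sum is the left-hand side, whereas you compute the exact margin $-|c^+-c^-|^2 S^+ S^-$, which is marginally cleaner and makes explicit the strictness requirement $S^+>0>S^-$ that the paper's proof also tacitly assumes.
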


\begin{proof} Although the state under consideration is not PPT, we can
use some calculations
from the proof of Theorem \ref{theo:PPT}.
 By defining $p_j= \braket{a_1,b_2}{\lambda_j^+}$ and $q_j= \braket{a_1,b_2}{\lambda_j^-}$ we obtain for the left hand side of \eq{eq:cauchy_TA}
 \begin{eqnarray}\label{eq:LS}
 LS&\equiv&\left|\bra{a_1,b_2}\varrho^{T_A}\ket{a_2,b_1}\right|\\
 &=&|c^+|\sum_j \lambda_j^+ |p_j|^2 + |c^-|\sum_j |\lambda_j^-| |q_j|^2.\label{eq:LS}
 \end{eqnarray}
 The right hand side becomes
\begin{eqnarray}
 RS&\equiv&\sqrt{\bra{a_2,b_1}\varrho\ket{a_2,b_1}\bra{a_1,b_2}\varrho\ket{a_1,b_2}}
 \\&=& \sqrt{|c^+|^2\sum_j \lambda_j^+ |p_j|^2-|c^-|^2\sum_j|\lambda_j^-||q_j|^2}\nonumber \\ &\times& \sqrt{\sum_j \lambda_j^+ |p_j|^2-\sum_j|\lambda_j^-||q_j|^2}.
 \end{eqnarray}
 By expanding the product we arrive at
 \BE
 \begin{split}
 RS=&\Big[|c^+|^2 \Big(\sum_j \lambda_j^+ |p_j|^2\Big)^2 +|c^-|^2 \Big(\sum_j |\lambda_j^-| |q_j|^2\Big)^2 \\&-\left(|c^+|^2+|c^-|^2\right) \Big(\sum_j \lambda_j^+ |p_j|^2\Big)\Big(\sum_j |\lambda_j^-| |q_j|^2\Big)\Big]^{1/2}.
 \end{split}
 \EE
 With the help of $-\left(|c^+|^2+|c^-|^2\right)\leq -2|c^+||c^-|$ we are able to estimate the right side by
 \BE\label{eq:RS}
 RS\leq \Big||c^+|\sum_j \lambda_j |p_j|^2 - |c^-|\sum_j |\lambda_j^-| |q_j|^2\Big|.
 \EE
 A comparison of  \eq{eq:LS} and  \eq{eq:RS} shows that the left hand side of \eq{eq:cauchy_TA} is the summation of two positive numbers whereas the right side is smaller or equal than the absolute value of the difference of the same two positive numbers. As a consequence, we have $LS > RS$ and a violation of \eq{eq:cauchy_TA} 
 which implies a violation of criterion \theo{theo:cauchy}.\qed

\end{proof}

 
 Now, we are able to identify  certain classes of entangled states
 that can be detected with our criterion. One of these classes are 
 bipartite qubits states:

 \begin{theorem}
  Every entangled two-qubit state $\varrho $ can be detected
  with \theo{theo:cauchy}
 \end{theorem}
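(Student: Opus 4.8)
The plan is to reduce the claim to the machinery already assembled. First I would invoke the Peres--Horodecki result that for a $2\times 2$ system a state is entangled if and only if it has a non-positive partial transpose; combined with \theo{theo:PPT}, which tells us that our criterion detects only NPT states, this shows it suffices to prove that \emph{every} NPT two-qubit state violates \eq{eq:Cauchy2}. For this I would aim to verify the hypotheses of \lem{lemma:operators} and then quote the lemma.

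The heart of the argument lies in two spectral facts about $\varrho^{T_A}$ for two qubits. The first is that $\varrho^{T_A}$ has at most one negative eigenvalue, so for an NPT state there is exactly one eigenvector $\ket{\lambda^-}$ with negative eigenvalue. This is decisive, since it collapses the second condition of \lem{lemma:operators} to a single scalar equation. The second fact is that $\ket{\lambda^-}$ cannot be a product vector: for any product state $\ket{e,f}$ one has $\bra{e,f}\varrho^{T_A}\ket{e,f}=\bra{e^\ast,f}\varrho\ket{e^\ast,f}\geq 0$, so a negative eigenvalue forbids $\ket{\lambda^-}$ from being a product. For two qubits this forces the Schmidt rank of $\ket{\lambda^-}$ to be exactly two, i.e. $\ket{\lambda^-}=s_1\ket{e_1,f_1}+s_2\ket{e_2,f_2}$ with orthonormal Schmidt bases and $s_1,s_2>0$.

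With this decomposition in hand I would choose the states entering \lem{lemma:operators} as $\ket{a_2}=\ket{e_1}$, $\ket{b_1}=\ket{f_1}$, $\ket{a_1}=\ket{e_2}$, $\ket{b_2}=\ket{f_2}$. Then the combination $\ket{a_2,b_1}-\overline{c^+}\,\ket{a_1,b_2}$ becomes proportional to $\ket{\lambda^-}$ precisely when $c^+=-s_2/s_1$, and since eigenvectors of the Hermitian operator $\varrho^{T_A}$ belonging to different eigenvalues are orthogonal, this combination is automatically orthogonal to every positive eigenvector $\ket{\lambda_j^+}$. This gives the first condition of the lemma with a single, $j$-independent constant $c^+$. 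The remaining condition involves only the one vector $\ket{\lambda^-}$ and holds with $c^-=\braket{a_2,b_1}{\lambda^-}/\braket{a_1,b_2}{\lambda^-}=s_1/s_2$; one then checks $c^+c^-=-1<0$, so the sign requirement $c^+c^-<0$ is met. Invoking \lem{lemma:operators} yields a violation of \eq{eq:Cauchy2}, hence detection by \theo{theo:cauchy}.

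I expect the main obstacle to be the two spectral facts rather than the construction, which is essentially forced once they are in place. In particular, the statement that the partial transpose of a two-qubit state has at most one negative eigenvalue is the key qubit-specific input, as it is what renders the negative-eigenvector condition of \lem{lemma:operators} trivial; I would either cite this known result or derive it. I would also be careful that the Schmidt rank of $\ket{\lambda^-}$ is genuinely two and not one, which is exactly what the positivity estimate $\bra{e,f}\varrho^{T_A}\ket{e,f}\geq 0$ guarantees and what makes both coefficients $s_1,s_2$ strictly positive so that $c^-$ is well defined. Everything else is a direct matching to hypotheses already proved.
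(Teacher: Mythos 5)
Your proposal is correct and is essentially the paper's own proof: both arguments reduce the claim to \lem{lemma:operators} by noting that an entangled two-qubit state is NPT with exactly one negative eigenvalue of $\varrho^{T_A}$, Schmidt-decomposing the (necessarily entangled) negative eigenvector, choosing $\ket{a_2,b_1}$ and $\ket{a_1,b_2}$ to be its two Schmidt product components, and obtaining $c^+=-s_2/s_1$, $c^-=s_1/s_2$ with $c^+c^-=-1<0$; your orthogonality derivation of the $j$-independent $c^+$ is a coordinate-free restatement of the paper's explicit parametrization $\ket{\lambda_k^+}=r_k\left(s_1\ket{00}-s_0\ket{11}\right)+\gamma_k\ket{01}+\delta_k\ket{10}$ of the positive eigenvectors, which is exactly the orthogonal complement of $\ket{\lambda^-}$. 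The one point the paper makes explicit and you leave implicit is the nondegeneracy needed for a \emph{strict} violation inside \lem{lemma:operators} --- the paper's ``at least one $r_k\neq 0$'' discussion, backed by the fact that $\det\varrho^{T_A}<0$ gives three strictly positive eigenvalues --- but your construction satisfies it automatically (there are no zero eigenvalues, and the product vector $\ket{e_2,f_2}$ cannot be proportional to the entangled $\ket{\lambda^-}$), so this is a presentational rather than a logical difference.
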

 
 \begin{proof}
 Every entangled two-qubit state is an NPT-state. All eigenstates $\ket{\lambda_j^-}$  corresponding to a negative eigenvalue $\lambda_j^-$ of the partial transpose of $\varrho$   form an entangled subspace \cite{Johnston2013,Rana2013}. That means it is not possible to construct a product state by a superposition of $\ket{\lambda_j^-}$. As a consequence, for two-qubit states  only one single negative eigenvalue can exist \cite{Sanpera1998}. Its eigenstate is given in its Schmidt basis by
  \BE
  \ket{\lambda^-}\equiv s_0\ket{00}+s_1\ket{11}.
  \EE
  with $s_j>0$.   Since a two-qubit state is separable iff the determinant of its partial transpose is nonnegative \cite{Horodecki2008}, $\varrho^T_A$ has three strictly positive eigenvalues $\lambda^+_j>0$. Their corresponding eigenstates are given by 
  \BE
  \ket{\lambda_k^+}\equiv  r_{k} \left(s_1\ket{00}-s_0\ket{11}\right)+\gamma_k\ket{01}+\delta_k\ket{10}
  \EE
  with arbitrary coefficients $r_k$, $\gamma_{k}$ and $\delta_{k}$ and at least one $k$ for which $r_k\neq 0$.
 By choosing the  states $\ket{a_2,b_1}=\ket{00}$ and $\ket{a_2,b_1}=\ket{11}$ we find
  \BE
   \braket{00}{\lambda_k^+}=r_{k} s_1 \,,\, \braket{11}{\lambda_k^+}=-r_{k} s_0.
  \EE
  Therefore the constant $c^+$  defined in \lem{lemma:operators} does exist for these states and is given by  $c^+=-s_1/s_0$. Similar we find $c^-=s_0/s_1$ and as a consequence \lem{lemma:operators} can be applied. \qed
 \end{proof}
  
  We want to stress out the fact that $\varrho$ is entangled iff the determinant of 
  $\varrho^T_A$ is negative is not necessary for our proof. Moreover, the existence 
  of at least one $r_k\neq 0$ can be proven from our previous considerations:
  
  Assume that there were an NPT two-qubit state with a vanishing eigenvalue, such that 
  $r_{k}=0$ for all $k$. Since the trace is preserved under partial transposition,
  there must exist at least one positive eigenvalue and therefore  there is some $\gamma_{k}\neq 0$ or $\delta_k\neq 0$. Let us  assume $\gamma_{k_0}\neq 0$. It follows
  that we can choose the product states $\ket{a_2,b_1}=\ket{00}+\ket{01}$ and $\ket{a_1,b_2}=\ket{00}-\ket{01}$ and find
 \BE
   \braket{a_2b_1}{\lambda_k^+}=\gamma_{k}  \,,\, \braket{a_1,b_2}{\lambda_k^+}=-\gamma_{k} .
  \EE
  Therefore, the constant $c^+$ is given by $c^+=-1$ and similar $c^-=1$. By using \lem{lemma:operators} we get a violation of \eq{eq:cauchy_TA}. However, in this 
  case we would have $  A_1  A_1^\dagger=\ket{0}\bra{0}=  A_2^\dagger  A_2$ and for 
  these operators \eq{eq:Cauchy2} can never be violated. As a consequence if 
  $\ket{\lambda_j^-}\equiv \alpha \ket{00}+\beta \ket{11}$ is an eigenvector 
  of $\varrho^{T_A}$ with a negative eigenvalue then there must be an eigenvector $\ket{\lambda_j^+}$ with
  \BE
  \bra{\lambda_j^+}\left(s_1\ket{00}-s_0\ket{11}\right)\neq 0.
  \EE

  
  Also in higher dimensions there exist classes of entangled states 
  which can be detected with our criterion. One of these classes are 
  entangled states mixed with with noise:

  \begin{theorem}\label{theorem:wn}
   Every  NPT-state of the form
   \BE
   \varrho_\text{wn}=p \ket{\psi_\text{ent}}\bra{\psi_\text{ent}}+\frac{1-p}{D}\mathbbm{1}_D,
   \EE
   with $\mathbbm{1}_D$ being the identity matrix of dimension $D$ of $\varrho$ 
   and $\ket{\psi_\text{ent}}$ being a pure entangled state, can be detected by \eq{eq:Cauchy2}.
  \end{theorem}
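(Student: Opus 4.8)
The plan is to reduce the statement to a single application of \lem{lemma:operators}, in the same spirit as the two-qubit theorem. The key structural observation is that white noise leaves the eigenvectors of the partial transpose untouched: since $\mathbbm{1}_D^{T_A}=\mathbbm{1}_D$, one has $\varrho_\text{wn}^{T_A}=p\,(\ketbra{\psi_\text{ent}})^{T_A}+\tfrac{1-p}{D}\mathbbm{1}_D$, so $\varrho_\text{wn}^{T_A}$ and $(\ketbra{\psi_\text{ent}})^{T_A}$ share the same eigenvectors while all eigenvalues are merely shifted by $\tfrac{1-p}{D}$. This lets me read off the full spectral decomposition from the Schmidt form $\ket{\psi_\text{ent}}=\sum_i s_i\ket{ii}$ with $s_0\ge s_1\ge\cdots\ge 0$.

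First I would diagonalize $(\ketbra{\psi_\text{ent}})^{T_A}=\sum_{k,l}s_ks_l\ket{kl}\bra{lk}$ explicitly. It is block diagonal: the diagonal states $\ket{ii}$ are eigenvectors with eigenvalue $s_i^2$, while each pair $i<j$ contributes a two-dimensional block on $\{\ket{ij},\ket{ji}\}$ with eigenvectors $\tfrac{1}{\sqrt2}(\ket{ij}\pm\ket{ji})$ and eigenvalues $\pm s_is_j$. After the noise shift the only candidates for a negative eigenvalue are the antisymmetric vectors, with value $-p\,s_is_j+\tfrac{1-p}{D}$, which is most negative for the largest product $s_0s_1$. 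Hence the NPT hypothesis forces $-p\,s_0s_1+\tfrac{1-p}{D}<0$, so $\ket{\lambda^-}=\tfrac{1}{\sqrt2}(\ket{01}-\ket{10})$ is guaranteed to be a genuine negative eigenvector, whereas $\tfrac{1}{\sqrt2}(\ket{01}+\ket{10})$ is always a positive one.

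Then I would feed the product states $\ket{a_2,b_1}=\ket{01}$ and $\ket{a_1,b_2}=\ket{10}$ into \lem{lemma:operators}. The point is that among all eigenvectors only the $(0,1)$-block vectors have nonzero overlap with $\ket{01}$ and $\ket{10}$; every other eigenvector sits in an orthogonal block (or on the diagonal) and satisfies the required relation trivially as $0=c^\pm\cdot 0$. For the surviving vectors one reads off $\braket{01}{\lambda^+}=+\braket{10}{\lambda^+}$ and $\braket{01}{\lambda^-}=-\braket{10}{\lambda^-}$, so $c^+=+1$, $c^-=-1$ and $c^+c^-=-1<0$. \lem{lemma:operators} then certifies a violation of \eq{eq:Cauchy2}, which proves the claim. (Entanglement of $\ket{\psi_\text{ent}}$ ensures at least two nonzero Schmidt coefficients, so the $(0,1)$-block genuinely exists.)

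The step I expect to need the most care is controlling degeneracies of the Schmidt products $s_is_j$. When several products coincide, the corresponding eigenspaces are no longer one-dimensional and an eigenvector can be an arbitrary rotation inside a degenerate block, so I cannot simply point at one fixed vector. I would sidestep this by verifying the proportionality at the level of overlaps rather than of individual eigenvectors: any negative eigenvector expands as $\sum_{(i,j)}\alpha_{ij}\tfrac{1}{\sqrt2}(\ket{ij}-\ket{ji})$, and since only the $\alpha_{01}$ component reaches $\ket{01}$ and $\ket{10}$, the identity $\braket{01}{\lambda^-}=-\braket{10}{\lambda^-}$ survives for every basis choice in the degenerate eigenspace, and likewise $c^+=+1$ in the positive sector. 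The strict inequality in the NPT assumption is what keeps the negative block cleanly separated from every positive eigenvalue, so no accidental mixing can spoil the constants $c^\pm$.
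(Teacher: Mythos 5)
Your proposal is correct and follows essentially the same route as the paper: it derives the spectral decomposition of $\varrho_\text{wn}^{T_A}$ (the content of \lem{lemma:wn}), selects the antisymmetric negative eigenvector and its symmetric positive partner, chooses the product states $\ket{01}$ and $\ket{10}$, and invokes \lem{lemma:operators} with $c^+=+1$, $c^-=-1$, noting all other eigenvectors satisfy the proportionality trivially as $0=c^\pm\cdot 0$. Your explicit treatment of degenerate eigenspaces is a point the paper glosses over, and your argument there (only the $(0,1)$-block component of any eigenvector overlaps $\ket{01}$ and $\ket{10}$, and strictly negative eigenvalues can never mix with symmetric or diagonal eigenvectors) is sound.
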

  
  To prove this theorem, we use the following fact, that can be proved by direct 
  calculation:
    \begin{lemma}\label{lemma:wn}
   The partially transposed state  $\varrho_\text{wn}^{T_A}$ 
   with  $\ket{\psi_\text{ent}}$  given in its Schmidt basis by
   $\ket{\psi_\text{ent}}=\sum_{j} s_j \ket{jj}$
   has the following eigenvalues and eigenstates:
   \begin{eqnarray}
    \ket{\lambda_{j,j}}=\ket{j,j} &, &\lambda_{j,j}= \frac{1-p}{D}+p s_j^2\\
     \ket{\lambda_{j,k}}=\frac{\ket{j,k}\pm \ket{k,j}}{\sqrt{2}} &,& \lambda_{j,k}= \frac{1-p}{D}\pm p s_j s_k
   \end{eqnarray}
  \end{lemma}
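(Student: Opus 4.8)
The plan is to establish both claims by a direct computation of the partially transposed operator $\varrho_\text{wn}^{T_A}$, followed by diagonalization of its (very simple) block structure. Since the maximally mixed part is invariant under partial transposition, $\mathbbm{1}_D^{T_A}=\mathbbm{1}_D$, only the coherent term needs attention. Writing the projector in the Schmidt basis as $\ket{\psi_\text{ent}}\bra{\psi_\text{ent}}=\sum_{j,k}s_js_k\ket{jj}\bra{kk}$ and using that $T_A$ exchanges the Alice indices of bra and ket, i.e. $(\ket{j}\bra{k})^{T_A}=\ket{k}\bra{j}$, I would obtain
\BE
\varrho_\text{wn}^{T_A}=\frac{1-p}{D}\mathbbm{1}_D+p\sum_{j,k}s_js_k\,\ket{k,j}\bra{j,k}.
\EE

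The decisive observation is that the transposed coherence term no longer acts diagonally but instead couples only the two states $\ket{j,k}$ and $\ket{k,j}$: it sends $\ket{j,k}\mapsto p\,s_js_k\ket{k,j}$ and, by the symmetry $s_js_k=s_ks_j$, sends $\ket{k,j}\mapsto p\,s_js_k\ket{j,k}$. Hence $\varrho_\text{wn}^{T_A}$ is block diagonal with respect to the orthogonal decomposition of the Hilbert space into the one-dimensional subspaces spanned by $\ket{j,j}$ and the two-dimensional subspaces spanned by $\{\ket{j,k},\ket{k,j}\}$ with $j<k$. On a diagonal subspace the coherence term simply rescales $\ket{j,j}$ by $p\,s_j^2$, so $\ket{j,j}$ is an eigenstate with eigenvalue $\lambda_{j,j}=\tfrac{1-p}{D}+p\,s_j^2$. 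On each two-dimensional block the operator reads $\tfrac{1-p}{D}\mathbbm{1}+p\,s_js_k X$, where $X$ is the swap $\ket{j,k}\leftrightarrow\ket{k,j}$ with eigenvalues $\pm1$; its eigenvectors are the symmetric and antisymmetric combinations $(\ket{j,k}\pm\ket{k,j})/\sqrt{2}$, giving the eigenvalues $\lambda_{j,k}=\tfrac{1-p}{D}\pm p\,s_js_k$, exactly as stated.

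I expect no real obstacle, as the lemma reduces to diagonalizing a single $2\times2$ matrix once the block structure is recognized; this is why the statement is declared to hold ``by direct calculation.'' The only points deserving care are the index bookkeeping under partial transposition --- in particular that the Schmidt coherences $\ket{jj}\bra{kk}$ map to $\ket{k,j}\bra{j,k}$ rather than staying diagonal, which is precisely what makes a negative eigenvalue possible --- and the verification that the listed eigenvectors form a complete orthonormal basis. For a $d\times d$ system the latter follows by counting: the $d$ states $\ket{j,j}$ together with the $2\binom{d}{2}=d(d-1)$ symmetric/antisymmetric pairs account for all $d^2=D$ dimensions, with the convention $s_j=0$ for any Schmidt coefficient beyond the Schmidt rank of $\ket{\psi_\text{ent}}$.
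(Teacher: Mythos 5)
Your proof is correct: the paper itself offers no explicit argument for this lemma, merely asserting it ``can be proved by direct calculation,'' and your computation is precisely that calculation carried out --- partial transposition maps the Schmidt coherences $\ket{jj}\bra{kk}$ to $\ket{k,j}\bra{j,k}$, yielding a block structure of $1\times1$ blocks on $\ket{j,j}$ and $2\times2$ swap blocks on $\{\ket{j,k},\ket{k,j}\}$, whose diagonalization gives exactly the stated eigenvalues and eigenvectors. Your additional remarks on completeness of the eigenbasis and the convention $s_j=0$ beyond the Schmidt rank are correct and make the argument fully rigorous.
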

  
 Now we are able to prove the theorem:
   
   \begin{proof}
   From the previous Lemma one finds that an eigenstate corresponding to negative 
   eigenvalues is of the form
    \BE
    \ket{\lambda_-}=\frac{\ket{j_0,k_0}- \ket{k_0,j_0}}{\sqrt{2}}
    \EE
    and that  
    \BE
    \ket{\lambda_+}=\frac{\ket{j_0,k_0}+ \ket{k_0,j_0}}{\sqrt{2}}
    \EE
   is an  eigenstate corresponding to a positive eigenvalue. Furthermore, $\braket{j_0,k_0}{\lambda_n}=\braket{k_0,j_0}{\lambda_n}=0$ for all other eigenstates $\ket{\lambda_n}$. 
    By choosing $\ket{a_2,b_1}=\ket{j_0,k_0}$ and $\ket{a_1,b_2}=\ket{k_0,j_0}$ we find
   \begin{eqnarray}
    \braket{a_2,b_1}{\lambda_-}&=&-\braket{a_1,b_2}{\lambda_-}\\
    \braket{a_2,b_1}{\lambda_+}&=&+\braket{a_1,b_2}{\lambda_+}\\
    \braket{a_2,b_1}{\lambda_n}&=&\pm\braket{a_1,b_2}{\lambda_n}=0.
    \end{eqnarray}
    As a consequence the constants $c^+$ and $c^-$ of \lem{lemma:operators} 
    do exist and therefore the entanglement of $\varrho_\text{wn}^\text{PT}$ 
    can be detect with the help of \theo{theo:cauchy}.\qed
   \end{proof}

It is important to point out a consequence of the optimality of choosing $A_{1/2}$ and $B_{1/2}$ such that the criterion takes the form of eq.~(\ref{eq:optbisep}). There are only up to four independent vectors (in the previous notation $|a\rangle,|b\rangle,|\alpha\rangle,|\beta\rangle$) appearing in the inequality. If we denote $|\alpha\rangle=\lambda|a\rangle+\sqrt{1-|\lambda|^2}|a^\perp\rangle$ with $\lambda=\langle a|\alpha\rangle$ it becomes obvious that the criterion is invariant under prior projection into the qubit subspace on Alice's side given by $\mathbbm{1}_2=|a\rangle\langle a|+|a^\perp\rangle\langle a^\perp|$. As the same holds for Bob's side we can conclude that the bipartite version of this theorem is actually equivalent to detecting entanglement in a $2\times 2$ dimensional subspace. From Ref.~\cite{horodecki98a} it follows that any violation of the criterion implies one-copy distillability, which on the other hand implies the impossibility of detecting some NPT states, such as the well known 
Werner state \cite{Werner1989}. In fact it also implies that we can strictly improve the detection strength of the criterion by applying it on multiple copies of the state, as there are known examples of states that are two-, but not one-copy distillable. However some NPT states are most likely beyond the reach of our criterion as e.g. the Werner state is conjectured to be NPT bound entangled, i.e. even up to infinitely many copies do not have an entangled $2\times 2$ subspace for a specific region of parameters.\\
A natural extension of our theorem that would close this (small) gap for bipartite systems is an extension to $D\times D$-dimensional subspaces, which could be written as
\begin{align}
\text{Det}\begin{pmatrix} \rho_{i_1i_1i_1i_1} & \rho_{i_1i_2i_2i_1} & \cdots & \rho_{i_1i_Di_Di_1} \\\rho_{i_2i_1i_1i_2} & \rho_{i_2i_2i_1i_2} & \cdots & \vdots \\\vdots & \vdots & \ddots &\vdots\\\rho_{i_Di_1i_1i_D}&\cdots&\cdots&\rho_{i_Di_Di_Di_D} \end{pmatrix}\geq0\,,
\end{align}
with the notation $\langle a b|\rho_{PPT}|\alpha\beta\rangle=\rho_{ab\alpha\beta}$. This criterion is a natural extension of our main theorem to higher dimensions and in this form is in principle capable of detecting all NPT states, not only one-copy distillable ones. The caveat here is however, that due to the non-convex structure we can neither make use of the Cauchy-Schwarz inequality, nor find an analogue extension to the more interesting case of many particles. So instead we focus on our main theorem, which allows for a straightforward generalization to multipartite systems.

\subsection{Multiparticle entanglement}
\label{sec:multi}

Pure product states and mixed product states of the 
form $\varrho=\bigotimes_{j=1}^N \varrho_j$ with $\varrho_j$ 
do not contain any correlations (nc). The generalization 
of \eq{eq:Cauchy1} to multipartite systems for these 
states is straightforward:
\BE
|\langle \bigotimes\limits_{k=1}^{N}  P_k Q_k\rangle_\nc| \leq \prod\limits_{k=1}^N \sqrt{\langle   P_k   P_k^\dagger\rangle_\nc \langle  Q_k^\dagger   Q_k\rangle_\nc}\label{multi_cauchy_prod1}
\EE
with operators $  P_k$ and $ Q_k$ acting on subsystem $k$.
Again, we can combine expectation values of different subsystems and arrive
at a generalization of \theo{theo:cauchy}: 
\BE
|\langle \bigotimes\limits_{k=1}^{N}  P_k Q_k\rangle_\nc| \leq \prod\limits_{k=1}^N \sqrt{\langle   P_k   P_k^\dagger  Q_{k+1}^\dagger  Q_{k+1} \rangle}_\nc.\label{multi_cauchy_prod2}
\EE
where we defined $ Q_{N+1}=Q_{1}$. Also other combination of expectation values are possible. However, these inequalities are not convex and therefore in general not valid for  mixed separable states.

To generalize entanglement criteria of the form of \eq{multi_cauchy_prod2} 
to mixed fully separable states, we have to use the H\"older  
inequality. We will explain the generalization using the example of 
\eq{multi_cauchy_prod2} for a tripartite state with the operators 
$A_j$, $  B_j$ and $  C_j$ acting on Alice's, Bob's and Charlie's subsystem, 
respectively. 

\subsubsection{Scheme to develop entanglement criteria \label{sec:scheme}}
Performing the following steps will result in an inequality to detect 
entanglement. We assume that we start with an fully separable state
\be
\vr = \sum_j p_j \ketbra{\psi^\ps_j}
\ee
where the $\ket{\psi^\ps_j}$ are pure multiparticle product states as above.

\begin{step}
Write the expectation value $\langle   A_1  A_2  B_1  B_2  C_1  C_2\rangle$ of 
the fully seperable state as a convex combination of expectation values for 
pure product states $\ket{\psi^\ps_j}$. 
\BE
\langle   A_1  A_2  B_1  B_2  C_1  C_2\rangle_\text{sep}= \sum\limits_j p_j\langle   A_1  A_2  B_1  B_2  C_1  C_2\rangle_{\ps,j}  
\EE
with $\sum_j p_j=1$ and $\langle \cdots \rangle_{\ps,j}$ denoting the expectation value of state $\ket{\psi^\ps_j}$.
\end{step}

\begin{step}
 Write every expectation value $\langle \cdots \rangle_{\ps,j}$ as expectation values of single subsystems and apply the Cauchy Schwarz inequality to each of them
 \BE
|\langle   A_1  A_2  B_1  B_2  C_1  C_2\rangle_\sep|\leq \sum\limits_j p_j\sqrt{\langle   A_1  A_1^\dagger\rangle_j \langle  A_2^\dagger   A_2\rangle_j \cdots  }.
\EE
Since operators of the form $  O=  D   D^\dagger$ are positive operators, it is also possible to increase the number of expectation values by using $\langle   O\rangle = \sqrt[n]{\langle   O\rangle^n}$ .
\end{step}

\begin{step}
 Combine arbitrary expectation values of different subsystems, for example
\BE
\begin{split}
&|\langle   A_1  A_2  B_1  B_2  C_1  C_2\rangle_\sep| \\
\leq& \sum\limits_j p_j \sqrt{\langle   A_1  A_1^\dagger   B_2^\dagger   B_2\rangle_{j} \langle   B_1  B_1^\dagger   C_2^\dagger   C_2\rangle_{j}\langle   C_1  C_1^\dagger   A_2^\dagger   A_2\rangle_{j}}
\end{split}
\EE
It is also possible to combine more than two subsystems in a single expectation value.
\end{step}

\begin{step}
 Use the generalized H\"older inequality \cite{Hardy34}
\BE
\sum\limits_j p_j x_j y_j\leq \Big(\sum\limits_j p_j x_j^{1/r}\Big)^r
\Big(\sum\limits_j p_j y_j^{1/s}\Big)^s\label{eq:hoelder}
\EE 
with $r+s=1$ and $p_j\geq 0$ to separate the summation of each 
expectation value. Here, the H\"older inequality may be used several
times. In our example, two applications of the inequality lead to:
\BE
\begin{split}
 |\langle   A_1  A_2  B_1  B_2  C_1  C_2\rangle_\sep| \leq& \Big(\sum\limits_j p_j \langle   A_1  A_1^\dagger   B_2^\dagger   B_2\rangle_{\ps,j}^{3/2}\Big)^{1/3} \\ &\times \Big(\sum\limits_j p_j \langle   B_1  B_1^\dagger   C_2^\dagger   C_2\rangle_{\ps,j}^{3/2}\Big)^{1/3} \\ &\times
 \Big(\sum\limits_j p_j \langle   C_1  C_1^\dagger   A_2^\dagger   A_2\rangle_{\ps,j}^{3/2}\Big)^{1/3}.
\end{split}
 \EE
\end{step}

\begin{step}
Use the inequality 
\BE
\langle   O\rangle^x\leq \langle   O^x\rangle \label{eq:posop}
\EE
for positive operators $O$ and $x\geq 1$ to make the expectation values 
appear only linearly. Then, everything can be written in terms of $\vr$
again, and the $p_j$ disappear. In our example we finally arrive at
\BE
\begin{split}
 & |\langle   A_1  A_2  B_1  B_2  C_1  C_2\rangle_\sep|^2\\ \leq&\sqrt[3]{\langle  
 \big(  A_1  A_1^\dagger   B_2^\dagger   B_2\big) ^{3/2}\rangle_\sep\langle  \big(  B_1  B_1^\dagger   C_2^\dagger   C_2\big) ^{3/2}\rangle_\sep}\\
 & \times \sqrt[3]{\langle  \big(  C_1  C_1^\dagger   A_2^\dagger   A_2\big) ^{3/2}\rangle_\sep},
  \end{split}
\EE
which is valid for all separable mixed states but can be violated by entangled states.
\end{step}

\subsubsection{Application\label{sec:application}}
In the same way, the criterion
\BE
\begin{split}
  &|\langle   A_1  A_2  B_1  B_2  C_1  C_2\rangle_\sep|\\
  \leq&\sqrt[4]{\langle   A_1  A_1^\dagger   B_1  B_1^\dagger   C_2^\dagger   C_2\rangle_\sep\langle   A_1  A_1^\dagger   B_2^\dagger  B_2   C_1   C_1^\dagger\rangle_\sep} \\ &\times \sqrt[4]{\langle   A_2^\dagger  A_2   B_1  B_1^\dagger   C_1   C_1^\dagger\rangle_\sep\langle   A_2^\dagger   A_2  B_2^\dagger  B_2   C_2^\dagger   C_2\rangle_\sep} \label{eq:Cauchy4}
  \end{split}
\EE
can be shown. This criterion checks all possible bipartitions 
simultaneously and cannot be seen as a combination of criteria 
for bipartite entanglement. Therefore it is possible to detect 
entanglement of states, which are biseperable under every bipartition 
but not fully separable, for example the bound entangled states from
Ref.~\cite{Acin2001}
\BE
  \varrho_{abc} =\frac{1}{n}\left(\begin{array}{cccccccc} 
  1&0&0&0&0&0&0&1\\0&a&0&0&0&0&0&0\\0&0&b&0&0&0&0&0\\0&0&0&1/c&0&0&0&0\\
  0&0&0&0&c&0&0&0\\0&0&0&0&0&1/b&0&0\\0&0&0&0&0&0&1/a&0\\1&0&0&0&0&0&0&1
  \end{array}\right)\label{def:rhoabc}
\EE 
for $abc\neq 1$ with $n=2+a+b+c+1/a+1/b+1/c$. These states are known to 
be separable for any bipartition, but not fully separable. By choosing
\begin{eqnarray}
  A_1=  B_1=  C_1& = &\ket{1}\bra{0}\label{def:op1} \nonumber\\
  A_2=  B_2=  C_2& = &\ket{0}\bra{0}\label{def:op2}.
\end{eqnarray}
we detect entanglement if $1< (abc)^{-1/4}$. A second criterion 
can be gained by combining the expectation values of single systems 
in another way. This second criterion detect entanglement if 
$1< (abc)^{+1/4}$. As a consequence, with the help of the CS inequality 
it can be proven that the state $\varrho_{abc} $ is entangled for 
$abc\neq 1$. A similar result was obtained in Ref.\cite{Guehne2010}. 
However, as we will show in the next section, our criterion is in 
general stronger.

Since the operator $\ket{1}\bra{0}$ is a non-hermitian, its expectation 
value cannot be directly measured. However, since 
$\ket{1}\bra{0}=\sigma_x-\I\sigma_y$ the expectation values  
can be determined by measuring the expectation values of the 
Pauli matrices in $x$ and $y$ direction.

A second interesting example is the bound entangled state from
Ref.~\cite{Kay2011},
\BE \label{def:rhoalpha}
\varrho_\alpha=\frac{1}{8+8\alpha}\left(\begin{tabular}{cccccccc}
	                 4+$\alpha$ &0&0&0&0&0&0&2 \\
	                 0&$\alpha$&0&0&0&0&2&0\\
	                 0&0&$\alpha$&0&0&-2&0&0\\
	                 0&0&0&$\alpha$&2&0&0&0\\
	                 0&0&0&2&$\alpha$&0&0&0\\
	                 0&0&-2&0&0&$\alpha$&0&0\\
	                 0&2&0&0&0&0&$\alpha$&0\\
	                 2 &0&0&0&0&0&0&4+$\alpha$ \\
	                \end{tabular}
	                \right)
\EE
This is a valid state for $2\leq \alpha$. This state is entangled 
(but separable for any bipartition) for $2\leq \alpha \leq 2\sqrt{2}$ 
and separable for $2\sqrt{2}\leq \alpha$ 
\cite{Kay2011, 2011PhLA..375..406G}

We use our criterion \eq{eq:Cauchy4} with operators similar 
to \theo{theo:opti}. Therefore, we define the quantity
\BE
\begin{split} E&\equiv |{\bra{\alpha_1,\beta_1,\gamma_1}}\varrho {\ket{\alpha_2,\beta_2,\gamma_2}}|
	 \\ & - \left(\sqrt[4]{\bra{{\alpha_1,\beta_1},{\gamma_2}}\varrho \ket{{\alpha_1,\beta_1},{\gamma_2}}\bra{{\alpha_1},{\beta_2},{\gamma_1}}\varrho \ket{{\alpha_1},{\beta_2},{\gamma_1}}}\right.\\
	 &\times \left.\sqrt[4]{\bra{{\alpha_2},{\beta_1,\gamma_1}}\varrho \ket{{\alpha_2},{\beta_1,\gamma_1}}{\bra{\alpha_2,\beta_2,\gamma_2}}\varrho {\ket{\alpha_2,\beta_2,\gamma_2}}}\right)\end{split}
\EE
which indicates entanglement if $E> 0$. Numerical maximization of $E$ leads 
to the optimal measurement basis
\begin{eqnarray}
	 \ket{\alpha_1}&=&(\ket{0}-\E^{-\I\varphi}\ket{1})/\sqrt{2} \nonumber \\
	 \ket{\alpha_2}&=&(\ket{0}+\E^{-\I(\frac{\pi}{2}-\varphi)}\ket{1})/\sqrt{2}
	  \nonumber\\
	 \ket{\beta_1}&=&(\ket{0}-\E^{+\I(\frac{\pi}{2}-\varphi)}\ket{1})/\sqrt{2}
	  \nonumber\\
	 \ket{\beta_2}&=&(\ket{0}+\E^{+\I\varphi}\ket{1})/\sqrt{2}
	  \nonumber\\
	 \ket{\gamma_1}&=&(\ket{0}+\E^{-\I\varphi}\ket{1})/\sqrt{2}
	  \nonumber\\ \ket{\gamma_2}&=&(\ket{0}-\E^{-\I(\frac{\pi}{2}-\varphi)}\ket{1})/\sqrt{2}
	 \end{eqnarray}
with the phase $\varphi\approx 0.138 \pi$ for $\alpha=2$. Here, the two 
measurement directions in a single subsystems are not orthogonal anymore. 
The result of the entanglement detection is shown in \fig{fig:rhoalpha}. 
For $\alpha < 2.4$ our criterion detects entanglement independently 
of whether we use the measurement basis optimized for $\alpha=2$ 
(\textcolor{red}{$+$}) or optimized the measurement basis for each 
$\alpha$ separately (\textcolor{blue}{$\times$}). For $\alpha<2.4$ both 
methods lead to the same quantity $E$ and to the same measurement basis. 
For $2.4 <\alpha$ also the optimized measurement basis does not detect the
states, since it leads to $E=0$. Note that in contrast to $\varrho_{abc}$, 
the entanglement of this state cannot be detected by the criteria developed 
from Ref.\cite{Guehne2010}, but there exist refined criteria which detect
the entanglement in the whole region $2 \leq \alpha \leq 2\sqrt{2}$
\cite{2011PhLA..375..406G}.

\begin{figure}
 \includegraphics[width=0.4\textwidth]{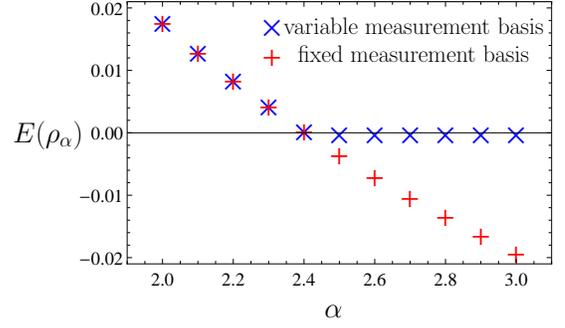}
 \caption{Entanglement detection of $\varrho_\alpha$ \eq{def:rhoalpha} 
 with the help of \eq{eq:Cauchy4}. $E>0$ indicates entanglement. For 
 \textcolor{blue}{$\times$} the measurement basis was optimized for 
 each $\alpha$, whereas for   \textcolor{red}{$+$} we used the 
 optimized basis for $\alpha=2$ for all $\alpha$. See the text 
 for further details. \label{fig:rhoalpha}}
\end{figure}


\section{Connections to existing criteria\label{sec:other_criteria}}

As explained in the introduction, one of the main motivations of our paper
is to present an unified view on several existing entanglement criteria. 
So in this section we show that several other entanglement criteria based 
on inequalities are applications of \eq{eq:Cauchy2}, although they were 
originally proven in a different way. 

\subsection{The criterion of Hillery and Zubairy}
The entanglement criterion
\BE
| \langle   A^\dagger    B\rangle_\sep|^2\leq  \langle   A^\dagger  A   B^\dagger  B\rangle_\sep\label{eq:alg_zubairy1}
\EE
from Ref.~\cite{Hillery2009} is a special case of \eq{eq:Cauchy2} where we 
have set $  A_2$ and $  B_1$ equal to unity and 
$  A_1 =   A^\dagger$ and $  B_2=  B$. By identifying $  A=a^m$ with $a$ 
being the annihilation operator of system A and $  B=(b^\dagger)^n$ with 
$b^\dagger$ being the creation operator of system $B$  we rederive
their original criterion
\BE
|\langle a^m (b^\dagger)^n\rangle_\sep|^2\leq \langle  (a^\dagger)^m a^m  (b^\dagger)^n b^b\rangle_\sep
\EE
given in Ref.~\cite{Hillery2006}. On the other hand, 
\BE
|\langle a^m b^n\rangle_\sep|^2\leq \langle  (a^\dagger)^m a^m \rangle_\sep \langle(b^\dagger)^n b^b\rangle_\sep
\EE
which was also derived in Ref.~\cite{Hillery2006} belongs to the special case  
\BE
| \langle   A  B\rangle_\sep|^2\leq  \langle   A^\dagger  A
\rangle_\sep\langle  B^\dagger  B\rangle_\sep \label{eq:alg_zubairy2}
\EE
where we have set  $  A_1$ and $  B_1$  equal to unity.

Let us compare our entanglement criterion \eq{eq:Cauchy2} with 
the criteria of the type \eq{eq:alg_zubairy2} and \eq{eq:alg_zubairy1} 
derived by Hillery and Zubairy for two qubit systems. By choosing 
\BE
  A_1=\ket{1}\bra{0},\;  A_2=\ket{0}\bra{0},\;  B_1=\ket{1}\bra{0},\;  B_2=\ket{0}\bra{0}\label{eq:operators}
\EE 
we obtain from \eq{eq:Cauchy2} that all separable states $\varrho$ 
obey 
\BE
|\varrho_{00,11}^\sep|^2\leq\varrho_{01,01}^\sep\varrho_{10,10}^\sep\label{eq:seevinck_d2}.
\EE
On the other hand  \eq{eq:alg_zubairy1} and \eq{eq:alg_zubairy2} 
transform for the choice of and $  A=  A_1$ and $  B=  B_1$ into
\begin{eqnarray}
|\varrho_{00,11}^\sep|^2&\leq&\varrho_{01,01}^\sep\\
|\varrho^\sep_{00,11}|^2&\leq&(\varrho^\sep_{10,10}+\varrho^\sep_{11,11})(\varrho^\sep_{01,01}+\varrho^\sep_{11,11}).
\end{eqnarray}
Due to the normalization of the state the coefficients of the matrix 
obey $0 \leq \varrho_{jk,jk}\leq 1$  and therefore these criteria 
are weaker then \eq{eq:Cauchy2}. However, \eq{eq:alg_zubairy1}  
requires only the estimation of a single expectation value or matrix 
entry instead of two for our criterion, and  \eq{eq:alg_zubairy2} only 
requires expectation values depending on single subsystems rather 
than correlations of both systems required for \eq{eq:Cauchy2}.

As suggested in Ref.~\cite{Hillery2006} \eq{eq:alg_zubairy1} can 
be generalized to
\BE
|\langle \prod\limits_{k=1}^{N}  A_k\rangle_\sep|^2 \leq  \langle \prod\limits_{k=1}^j  A_k^\dagger   A_k  \prod\limits_{k=j+1}^{N}   A_k   A_k^\dagger\rangle _\sep\label{eq:Zubairy_gen1}
\EE
with $ A_k$ being an operator acting on system $k$. This inequality holds 
true not only for seperable states, but also for biseperable states with 
respect to the partition ${1,2,\dots,j|j+1,\dots N}$. As a consequence, 
the inequality checks only if the state is biseperable with respect to 
a certain bipartition and does not check for real multipartite entanglement. 

\subsection{The criterion of Hillery et al.}

In Ref.~\cite{Hillery2010} the authors used a similar method to ours 
to generalize  \eq{eq:alg_zubairy2} to multipartite states: (i) First, they
write the separable state as a convex set of pure product states. (ii) Then, they
write every expectation value as the product of expectation values of single 
subsystems and apply the CS inequality to each of them. (iii) Finally, they
use the H\"older inequality \eq{eq:hoelder} to imply convexity. 
In this way they derive the criterion
\BE
 |\langle \prod\limits_{k=1}^n  A_k\rangle_\sep|\leq\prod\limits_{k=1}^n\langle (  A^\dagger_k A_k)^{n/2} \rangle_\sep^{1/n} \label{eq:Hillery1}
 \EE
 where $A_k$ denotes an operator acting on subsystem $k$.
For their second criterion
 \BE
  |\langle \prod\limits_{k=1}^n  A_k\rangle_\sep|\leq\langle\Big( \frac{1}{n}\sum\limits_{k=1}^n   A^\dagger_k A_k\Big)^{n/2} \rangle_\sep\label{eq:Hillery2}
\EE
they used in addition that the geometric mean is smaller or equal 
than the arithmetic mean between step (ii) and (iii). For some states, 
both criteria are equal, for some \eq{eq:Hillery1} is stronger and for 
others \eq{eq:Hillery2} is stronger. However, both criteria used again 
only a single operator per subsystem. Furthermore, after step (ii) 
the property of product states was not used anymore whereas in 
our criteria we used these properties  again to recombine expectation 
values. Therefore, our criterion is able to detect PPT states 
like $\varrho_{abc}$ defined in \eq{def:rhoabc} whereas 
\eq{eq:Hillery1} cannot, which follows from the following 
theorem:

\begin{theorem}
Assume a state $\varrho$ which is biseperable under some 
partitions in the following way: For any pair of subsystems
$k$ and $j$ there there exists a bipartion $M | \bar M$ of the $N$
particles such that $k \in M$ and $j \in \bar M$ and the state
$\vr$ is biseparable for this bipartition $M | \bar M$.
Then the criterion \eq{eq:Hillery1} cannot detect the 
entanglement of this state.
\end{theorem}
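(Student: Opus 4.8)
The plan is to show that the left-hand side of \eq{eq:Hillery1} vanishes under the stated biseparability hypothesis, so that the criterion can never be violated. The key observation is that \eq{eq:Hillery1} uses only a single operator $A_k$ per subsystem, and the quantity being bounded is the genuine multipartite correlator $\langle \prod_k A_k\rangle$. I would first recall, exactly as in the remark following \eq{eq:Zubairy_gen1}, that the Hillery-type inequalities hold not only for fully separable states but in fact for states biseparable with respect to any single fixed bipartition. Therefore the natural strategy is to argue that $|\langle \prod_{k=1}^n A_k\rangle|$ must be $0$ whenever the hypothesis holds, since a nonzero value would have to survive every bipartition simultaneously.

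First I would fix an arbitrary pair of indices $k,j$ and invoke the hypothesis to obtain a bipartition $M|\bar M$ with $k\in M$, $j\in\bar M$, under which $\varrho$ is biseparable. For a state biseparable across $M|\bar M$, writing it as a convex mixture of states that are products across this cut, each pure term factorizes as $\ket{\phi_M}\otimes\ket{\phi_{\bar M}}$. The expectation value $\langle\prod_\ell A_\ell\rangle$ on such a product term splits into $\bra{\phi_M}\bigotimes_{\ell\in M}A_\ell\ket{\phi_M}\cdot\bra{\phi_{\bar M}}\bigotimes_{\ell\in\bar M}A_\ell\ket{\phi_{\bar M}}$. I would then apply the Cauchy–Schwarz and Hölder machinery of Steps 1–5 in \Sec{sec:scheme}, but now grouped according to this particular cut, to reproduce an inequality of the form of \eq{eq:Hillery1} whose right-hand side is built only from the marginals on $M$ and on $\bar M$ respectively.

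The decisive step is to combine the bounds coming from different bipartitions. With operators chosen so that $A_k$ is a raising/lowering-type operator (as in \eq{def:op2}, $A_k=\ket{1}\bra{0}$), each cut contributes a factor governed by an expectation value that can be forced to vanish: the right-hand side factor associated with $M$ involves $\langle(\prod_{\ell\in M}A_\ell^\dagger A_\ell)^{\cdots}\rangle$, and because $k$ and $j$ always sit on opposite sides of the chosen cut, every pair of indices is "separated" by some bipartition. I would argue that this forces the product over all subsystems on the right-hand side of \eq{eq:Hillery1} to collapse — the off-diagonal coherence feeding $\langle\prod_k A_k\rangle$ cannot be reconstructed from marginals that are each blind to the cross-cut correlations — so the bound reduces to $|\langle\prod_k A_k\rangle|\leq 0$, making detection impossible.

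The main obstacle I anticipate is making the "every pair is separated by some cut" hypothesis do real work: one must show that the single-operator structure of \eq{eq:Hillery1}, after step (ii) in which product-state properties are discarded, genuinely cannot recombine the correlations that our criterion \eq{eq:Cauchy4} retains. Concretely, the hard part is verifying that for the relevant choice of operators the right-hand side of \eq{eq:Hillery1} is bounded by the geometric mean of quantities each factorizing across a cut that separates the support of $\prod_k A_k$, and hence vanishes. I would handle this by treating $\varrho$ as a mixture and using the convexity supplied by Hölder (Step 4) together with the factorization across each $M|\bar M$, reducing the whole estimate to the case of a single pure product term across one cut, where the vanishing is manifest.
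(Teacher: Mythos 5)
There is a genuine gap, and it is fatal: your central claim---that the hypothesis forces $|\langle \prod_{k=1}^n A_k\rangle| = 0$, or that the right-hand side of \eq{eq:Hillery1} ``collapses'' so the bound reads $|\langle\prod_k A_k\rangle|\leq 0$---is simply false. The state $\varrho_{abc}$ of \eq{def:rhoabc} is a counterexample: it is biseparable under \emph{every} bipartition, so it satisfies the hypothesis of the theorem in the strongest possible way, yet with $A_k=\ket{1}\bra{0}$ the correlator is $|\langle\prod_k A_k\rangle| = |\varrho_{1,8}| = 1/n > 0$, and the right-hand side of \eq{eq:Hillery1} is also strictly positive, since $\langle (A_k^\dagger A_k)^{n/2}\rangle = \langle\, \ketbra{0}\,\rangle_k$ is just a population. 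Cross-cut coherences of a mixed state do \emph{not} vanish merely because the state is biseparable across every cut separating any given pair of parties; indeed the entire point of the surrounding discussion in the paper is that $\varrho_{abc}$ carries exactly such a nonvanishing coherence, which the two-operator criterion \eq{eq:Cauchy4} detects and \eq{eq:Hillery1} does not. A second, independent flaw is logical: to show the criterion ``cannot detect'' $\varrho$ you must show \eq{eq:Hillery1} is satisfied for \emph{all} choices of the operators $A_k$, whereas your argument fixes a special raising/lowering choice.

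The correct statement to prove---and what the paper proves---is not that either side vanishes, but that the inequality \eq{eq:Hillery1} \emph{holds automatically} for every state obeying the hypothesis and every choice of operators, so it can never be violated. The mechanism is the recursive machinery you partially set up in your second paragraph, carried through to a different conclusion: for one hypothesized cut $A_1,\dots,A_l|A_{l+1},\dots,A_n$, apply the Cauchy--Schwarz inequality to each pure product term, then the generalized H\"older inequality \eq{eq:hoelder} with $r=l/n$, $s=(n-l)/n$, then $\langle O\rangle^x\leq\langle O^x\rangle$ from \eq{eq:posop}, obtaining a bound by two expectation values of positive operators, one on each side of the cut. The hypothesis guarantees that within each of these groups every pair of subsystems is again separated by some cut under which $\varrho$ is biseparable, so the same three steps can be iterated inside each factor, with exponents chosen so that the powers compose correctly ($r=x/l$, $s=(l-x)/l$ at the next level). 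The recursion terminates when every factor is a single-subsystem expectation value, and what emerges is precisely the right-hand side $\prod_{k=1}^n\langle(A_k^\dagger A_k)^{n/2}\rangle^{1/n}$ of \eq{eq:Hillery1}. In other words, the single-operator criterion is weak not because the correlator dies, but because its bound is derivable from pairwise-separating biseparability alone; your proposal would need to be rebuilt around this derivation rather than around any vanishing claim.
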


Note that by far not all bipartitions have to be separable to fulfill this
condition, the minimal number of different bipartitions needed for this 
assumption scales like $\log_2(N)$. 

\begin{proof}

Without loss of generality we assume there exists a bipartition $A_1,\dots,A_l|A_{l+1},\dots A_n$, and therefore
 \BE
 |\langle\prod\limits_{k=1}^n  A_k\rangle|\leq \sum\limits_{j}p_j\langle \prod\limits_{k=1}^l  A_k^\dagger  A_k\rangle_j^{1/2}\langle \prod\limits_{k=l+1}^n  A_k^\dagger  A_k\rangle_j^{1/2}.
 \EE
With the help of the generalized H\"older inequality \eq{eq:hoelder} 
we obtain
\BE
\begin{split}
 |\langle\prod\limits_{k=1}^n  A_k\rangle|\leq & \Big(\sum\limits_{j}p_j\langle \prod\limits_{k=1}^l  A_k^\dagger  A_k\rangle_j^{n/2l}\Big)^{l/n} \\
 &\times\Big(\sum\limits_{j}p_j\langle \prod\limits_{k=l+1}^n  A_k^\dagger  A_k\rangle_j^{n/2(n-l)}\Big)^{(n-l)/n}.
 \end{split}
\EE
by choosing $r=l/n$ and $s=(n-l)/n$. Again, with the help of \eq{eq:posop} we find
\BE
\begin{split}
|\langle\prod\limits_{k=1}^n  A_k\rangle|\leq & \Big(\langle \prod\limits_{k=1}^l\left(  A_k^\dagger  A_k\right)^{n/2l}\rangle\Big)^{l/n} \\
 &\times\Big(\langle \prod\limits_{k=l+1}^n\left(  A_k^\dagger  A_k\right)^{n/2(n-l)}\rangle\Big)^{(n-l)/n}.
 \end{split}
\EE
For each of the expectation values $\langle\prod   A_k^\dagger   A_k\rangle$ 
exist again a bipartition (with different $p_j$ and $\psi_j$) and therefore 
we arrive at
\BE
\begin{split}
&\Big(\langle \prod\limits_{k=1}^l\left(  A_k^\dagger  A_k\Big)^{n/2l}\rangle\right)^{l/n}  \\ =& \Big(\sum\limits_{j}p_j\langle \prod\limits_{k=1}^x\left(  A_k^\dagger  A_k\right)^{n/2l}\rangle_j \langle \prod\limits_{k=x+1}^l\left(  A_k^\dagger  A_k\right)^{n/2l}\rangle_j\Big)^{l/n}  \\
\leq &\Big(\langle \prod\limits_{k=1}^x\left(  A_k^\dagger  A_k\right)^{n/2x}\rangle\Big)^{x/n}\\ &\times\Big(\langle \prod\limits_{k=x+1}^l\left(  A_k^\dagger  A_k\right)^{n/2(l-x)\rangle}\Big)^{(l-x)/n}
\end{split}
\EE
by using again the generalized H\"older equation, now with $r = x/l$ and
$s= (l-x)/l$ and \eq{eq:posop}. This argument can be applied until all 
expectation values are expectation values of a single subsystem and 
as a consequence  we derive
\BE
 |\langle \prod\limits_{k=1}^n  A_k\rangle|\leq\prod\limits_{k=1}^n\langle (  A^\dagger_k A_k)^{n/2} \rangle^{1/n}.
\EE
This shows that the entanglement of an  $N$-partite state $\varrho$ cannot be 
detected by \eq{eq:Hillery1}, if for every two subsystem $i$ and $j$, there
exist a bipartition of $\varrho$ where $i$ and $j$ can be separated. Furthermore, 
it shows that the criterion \eq{eq:Hillery1} is just a recursive application 
of the criterion \eq{eq:alg_zubairy2} for bipartite entanglement from 
Hillery and Zubairy.  
\end{proof}

   
\subsection{The criteria from Shchukin and Vogel}
The criterion 
\BE
\begin{split}
|\langle a^{\dagger m} a^n a^{\dagger p} a^q  b^{\dagger s} b^r b^{\dagger k} b^l\rangle_\sep|^2 \leq &\langle a^{\dagger m} a^n a^{\dagger n} a^m  b^{\dagger l} b^k b^{\dagger k} b^l\rangle_\sep\\& \times \langle a^{\dagger q} a^p a^{\dagger p} a^q  b^{\dagger s} b^r b^{\dagger r} b^s\rangle_\sep \label{eq:Shchukin}
\end{split}
\EE
derived   by Shchukin and Vogel in \cite{Shchukin2005} was originally 
proved with the help of the PPT criterion 
\cite{Peres1996,Horodecki1996,Horodecki1997} and the matrix of moments. 

Similarly, our criterion \eq{eq:Cauchy2} follows also directly from 
the PPT criterion in the bipartite case. However, our criterion does 
not detect all NPT states. Nevertheless, \eq{eq:Cauchy2} can be used 
to prove the criteria \eq{eq:Shchukin} by identifying 
\begin{eqnarray}
  A_1=\left(a^\dagger\right)^ma^n&,&  A_2=\left(a^\dagger\right)^pa^q,\\
  B_1=\left(b^\dagger\right)^sb^r&,&  B_2=\left(b^\dagger\right)^kb^l.
\end{eqnarray}
As a consequence, all entangled states which can be detected with the 
criterion \eq{eq:Shchukin} derived by Shchukin and Vogel can be also 
detected with our criterion  \eq{eq:Cauchy2}.

\subsection{The criteria by G\"uhne and Seevinck}

In Ref.~\cite{Guehne2010} the investigation of multiparticle 
entanglement was started by deriving criteria for bipartite 
entanglement, 
\begin{eqnarray}
|\varrho^\sep_{1,8}|&\leq&\sqrt{\varrho^\sep_{2,2}\varrho^\sep_{7,7}}\label{eq:guehne1}\\
|\varrho^\sep_{1,8}|&\leq&\sqrt{\varrho^\sep_{3,3}\varrho^\sep_{6,6}}\label{eq:guehne2}\\
|\varrho^\sep_{1,8}|&\leq&\sqrt{\varrho^\sep_{4,4}\varrho^\sep_{5,5}}\label{eq:guehne3}
\end{eqnarray}
with the standard product basis $\{\ket{000},\ket{001},\dots,\ket{111}\}$. 
Although they were proven in another way, they directly follow from our criterion \eq{eq:Cauchy2} by interpreting the systems $A$ and $B$ as two sets of systems 
instead of two single systems.

Using the fact that the geometric mean is smaller or equal to the arithmetic mean, which denotes
\BE
\prod\limits_{j=1}^N a_k^{1/N} \leq \frac{1}{N}\sum\limits_{j=1}^N a_k 
\EE
these three equations lead to the criteria
\begin{eqnarray}
|\varrho^\sep_{1,8}|&\leq&\frac{1}{2}(\varrho^\sep_{2,2}+\varrho^\sep_{7,7})\\
|\varrho^\sep_{1,8}|&\leq&\frac{1}{2}(\varrho^\sep_{3,3}+\varrho^\sep_{6,6})\\
|\varrho^\sep_{1,8}|&\leq&\frac{1}{2}(\varrho^\sep_{4,4}+\varrho^\sep_{5,5})
\end{eqnarray}
used  also in Ref.~\cite{Duer2000}.

By a convex combination of the three equations \eq{eq:guehne1},
\eq{eq:guehne2} and \eq{eq:guehne3} one can derive the condition
\BE
|\varrho^\sep_{1,8}|\leq \sqrt{\varrho^\sep_{2,2}\varrho^\sep_{7,7}}+\sqrt{\varrho^\sep_{3,3}\varrho^\sep_{6,6}}+\sqrt{\varrho^\sep_{4,4}\varrho^\sep_{5,5}}
\EE
which is valid for all biseparable three qubit states, and is only violated 
by fully entangled states. This, of course, can not be derived by our framework, 
since we are only dealing with criteria excluding full separability.

However, by multiplying the three equations \eq{eq:guehne1}, \eq{eq:guehne2} and \eq{eq:guehne3} one finds the inequality 
\BE
|\varrho^\sep_{1,8}|\leq (\varrho^\sep_{2,2}\varrho^\sep_{3,3}\varrho^\sep_{4,4}\varrho^\sep_{5,5}\varrho^\sep_{6,6}\varrho^\sep_{7,7})^{1/6}\label{eq:seefinck_sep}.
\EE
which is valid for all fully seperable states.
This inequality  can be proven also directly by using 
our scheme from \Sec{sec:scheme} to develop the criterion
 \BE
 \begin{split}
   &|\langle   A_1  A_2  B_1  B_2  C_1  C_2\rangle_\sep|\\
   \leq&\sqrt[6]{\langle   A_2^\dagger   A_2  B_2^\dagger  B_2   C_1  C_1^\dagger \rangle_\sep\langle   A_2^\dagger  A_2   B_1  B_1^\dagger   C_2^\dagger   C_2\rangle_\sep} \\ 
  &\times \sqrt[6]{\langle   A_1  A_1^\dagger   B_2^\dagger   B_2  C_1   C_1^\dagger\rangle_\sep\langle   A_1  A_1^\dagger   B_2^\dagger  B_2   C_2^\dagger   C_2\rangle_\sep} \\
  &\times \sqrt[6]{\langle   A_2^\dagger   A_2   B_1  B_1^\dagger   C_1  C_1^\dagger\rangle_\sep\langle   A_1   A_1^\dagger   B_1  B_1^\dagger   C_2^\dagger  C_2\rangle_\sep}\label{eq:Cauchy6},
  \end{split}
 \EE
 and by using the operators defined in \eq{def:op1} and \eq{def:op2}. 
 Since \eq{eq:seefinck_sep} is a product of three criteria for 
 bi-separability, it cannot detect the entanglement of PPT states 
 like $\varrho_{abc}$ \eq{def:rhoabc}. However, as noted in Ref.~\cite{Guehne2010}
 in \eq{eq:seefinck_sep} it is possible to substitute certain matrix entries $\varrho_{j,j}$ by others, for example $\varrho_{2,2}\varrho_{3,3}\rightarrow \varrho_{1,1}\varrho_{4,4}$ to get different criteria. The substitution of matrix 
 entries is equivalent to a different combination of the operators $  A_j$, $  B_j$ 
 and $  C_j$ in our scheme. These new criteria  are able to detect 
 PPT states, for example the criterion
\BE
|\varrho_{1,8}^\sep|\leq (\varrho^\sep_{1,1}(\varrho^\sep_{4,4})^2\varrho^\sep_{5,5}\varrho^\sep_{6,6}\varrho^\sep_{7,7})^{1/6}
\label{eq:guehneabc}
\EE
derived by substituting $\varrho_{2,2}\varrho_{3,3}\rightarrow \varrho_{1,1}\varrho_{4,4}$ in \eq{eq:seefinck_sep}, 
leads to $ 1\leq (abc)^{-1/6}$ for the PPT-state $\varrho_{abc}$ 
given in \eq{def:rhoabc} which is violated for $abc>1$. With a 
second similar equations, the entanglement can be detected for 
$ 1\leq (abc)^{1/6}$. As a consequence, the criteria of 
Ref.~\cite{Guehne2010} detect entanglement if $abc\neq 1$ similar 
to our criterion shown in \Sec{sec:application}. However, 
by translating our criterion from \Sec{sec:application} into 
density matrix representation
\BE
|\varrho^\sep_{1,8}|\leq \sqrt[4]{\varrho^\sep_{1,1}\varrho^\sep_{4,4}\varrho^\sep_{6,6}\varrho^\sep_{7,7}}\label{eq:woelkabc}
\EE
we see that the criterion of Ref.~\cite{Guehne2010} is a weighted mean of 
the criteria \eq{eq:guehne3} and \eq{eq:woelkabc}. Therefore \eq{eq:guehneabc} 
is only equally or less strong than testing \eq{eq:guehne3} and \eq{eq:woelkabc} separately. This was also noted in Ref.~\cite{2011PhLA..375..406G}.



\subsection{The criterion by Huber et~al.}
  
In Ref.~\cite{Huber2010} the criterion
\BE
\begin{split}
&\sqrt{\Re \left[\bra{\Phi_\ps}(\mathbbm{1}_A \otimes \Pi_B)^\dagger \varrho_\sep^{\otimes m}(\Pi_A \otimes \mathbbm{1}_B )\ket{\Phi_\ps}\right]} \\ &\leq \sqrt{\bra{\Phi_\ps}\varrho_\sep^{\otimes m}\ket{\Phi_\ps}}
\end{split}
\EE
where $\varrho_\sep^{\otimes m}$ is an m-fold tensor product of the density 
matrix $\varrho_\sep$,  $\ket{\Phi_\ps}$ being a product state of the m-tupled 
system and $\Pi$ being the cyclic permutation operator. For the case $m=2$ 
this criterion transforms exactly into \eq{eq:optbisep} and is therefore 
equivalent to our criterion if one chooses the operators to be projectors 
on a single state.

For the case $m>2$ the criteria can be led back to a combination of the criterion with $m=2$ and different states $\ket{\Phi_\ps}$. Therefore, the criteria for $m>2$ do not show any advances compared to the criterion for $m=2$.

In the multipartite case, the generalization of the criterion for $m=2$ reads
\BE
\begin{split}
\sqrt{\bra{\Phi_\ps}\varrho_\bs^{\otimes m}\bf{\Pi}\ket{\Phi_\ps}}&\\
-\sum\limits_j\sqrt{\bra{\Phi_\ps}\mathcal{P}^\dagger_j\varrho_\bs^{\otimes m}\mathcal{P}_j\ket{\Phi_\ps}}&\leq 0
\end{split}
\EE
which is valid for all biseparable (bs) states. Here, the operator $\bf{\Pi}$ performs simultaneous permutations on all subsystems and $\mathcal{P}_j$ performs only a permutation on the subsystem $A_j$ for the different bipartitions $j$.

In a similar way, our criteria can be transformed to detect only genuine multipartite entanglement. Let $X_k$ and $Y_k$ be operators acting on subsystem $k$. Furthermore, let $S_j$ be the set of all pure biseparable states which are biseparable under the 
bipartition $A_j|B_j$. As a consequence,  every biseparable state 
\BE
\varrho_\bs\equiv \sum\limits_j \sum\limits_{s_j \in S_j} p_{s_j} \ket{s_j}\bra{s_j}
\EE
can be written as a convex combination of states $s_j$ belonging to different sets $S_j$ with weights $p_{s_j}$. In this case we find the inequality
\BE
\begin{split}
\big|\langle \prod\limits_k X_kY_k\rangle_\bs\big|\leq&
\sum\limits_j \sum\limits_{s_j \in S_j} p_{s_j} \sqrt{\langle \prod\limits_{k\in A_j} X_kX_k^\dagger  \prod\limits_{k\in Bj} Y_k^\dagger Y_k\rangle_{s_j}}\\
&\times \sqrt{\langle \prod\limits_{k\in A_j} Y_k^\dagger Y_k  \prod\limits_{k\in Bj} X_k X_k^\dagger\rangle_{s_j}}.
\end{split}
\EE
Since $X_kX_k^\dagger$ and $Y_k^\dagger Y_k$ are positive operators we 
increase the right side of this inequality be increasing the summation 
range from $S_j$ to $S=\cup S_j$. As a consequence, we obtain the criterion
\BE
\begin{split}
\big|\langle \prod\limits_k X_kY_k\rangle_\bs\big|\leq&
\sum\limits_j  \sqrt{\langle \prod\limits_{k\in A_j} X_kX_k^\dagger  \prod\limits_{k\in Bj} Y_k^\dagger Y_k\rangle_{\bs}}\\
&\times \sqrt{\langle \prod\limits_{k\in A_j} Y_k^\dagger Y_k  \prod\limits_{k\in Bj} X_k X_k^\dagger\rangle_{\bs}}.
\end{split}
\EE
for genuine multipartite entanglement.


\section{Conclusion}

In this paper we have shown how to develop entanglement criteria with 
the help of the Cauchy-Schwarz and the H\"older inequality and the 
properties of separable states. In the bipartite case, our criterion 
is strongly connected to the PPT-criterion and only NPT-states can be 
detected. We demonstrate that all two-qubit states and all NPT-entangled 
states mixed with white noise can be detected with our criteria. 
However, not all NPT-states can be detected with our criterion, 
since it operates in a $2\times2$ dimensional subspace only. 
However, we suggested some ideas for an additional entanglement criterion 
which may detect such states.

We generalized our criteria to the multipartite case with the help 
of the H\"older inequality. We demonstrated that with our scheme 
it is possible to detect the entanglement of states which are 
separable under every bipartite split. As a consequence, our 
criteria are not restricted to NPT entangled states in 
the multipartite case. We also explained how to transform 
our criteria in such a way that they detect only 
genuine multipartite entanglement. 

Furthermore, we showed that some already existing criteria 
for bipartite and multipartite entanglement  which have been 
proven already with other methods, are direct consequences 
of the Cauchy-Schwarz inequality. As a consequence, our 
method to derive entanglement criteria is more fundamental.
We demonstrated that criteria which use two operators 
per subsystems are in general stronger than criteria 
which use only a single operator per subsystem. As 
a consequence, our methods lead to stronger entanglement 
criteria.

\section{Acknowledgement}

S.W. thanks Suhail Zubairy for fruitful discussions. This work has been 
supported by the EU (Marie Curie CIG 293993/ENFOQI and Marie Curie 
IEF 302021/QUACOCOS), the BMBF (Chist-Era Project QUASAR), the 
FQXi Fund (Silicon Valley Community Foundation), and the DFG.

\bibliographystyle{apsrev}
\bibliography{entanglement}
\end{document}